\newtheorem{theorem}{Theorem}[section]
\newtheorem{lemma}{Lemma}[section]
\newtheorem{proposition}{Proposition}[section]
\newtheorem{definition}{Definition}[section]
\newenvironment{proof}[1][Proof]{\noindent\textbf{#1.} }{\ }
\begin{document}
\title{Formulae for entanglement in a linear coherent feedback network of multiple nondegenerate optical parametric amplifiers: the infinite bandwidth case}
\author{Zhan Shi and Hendra I. Nurdin % <-this % stops a space
%\thanks{This work was not supported by any organization}% 
\thanks{
Z. Shi and H. I. Nurdin are with School of Electrical Engineering and 
Telecommunications,  UNSW Australia,  
Sydney NSW 2052, Australia (e-mail: zhan.shi@student.unsw.edu.au,  h.nurdin@unsw.edu.au).} 
}
\maketitle

%\thispagestyle{empty} \pagestyle{empty}

%%%%%%%%%%%%%%%%%%%%%%%%%%%%%%%%%%%%%%%%%%%%%%%%%%%%%%%%%%%%%%%%%%%%%%%%%%%%%%%%%%%%%%%%%%%%%%%%%%%%%%%%%%%%%%
%%%%%%%%%%%%%%%%%%%%%%%%%%%%%%%%%%%%%%%%%%%%%%%%%%%%%%%%%%%%%%%%%%%%%%%%%%%%%%%%%%%%%%%%%%%%%%%%%%%%%%%%%%%%%%
%%%%%%%%%%%%%%%%%%%%%%%%%%%%%%%%%%%%%%%%%%%%%%%%%%%%%%%%%%%%%%%%%%%%%%%%%%%%%%%%%%%%%%%%%%%%%%%%%%%%%%%%%%%%%%
\begin{abstract}
This paper presents formulae for Einstein-Podolsky-Rosen (EPR) entanglement generated from $N$ nondegenerate optical parametric amplifiers (NOPAs) interconnected in a linear coherent feedback (CFB) chain in the idealized  lossless scenario and infinite bandwidth limit. The lossless scenario sets the ultimate EPR entanglement (two-mode squeezing) that can be achieved by this linear chain of NOPAs  while the infinite bandwidth limit simplifies the analysis but gives an accurate approximation to the EPR entanglement at low frequencies of interest. Two adjustable phase shifts are placed at the outputs of the system to achieve the best EPR entanglement by selecting appropriate quadratures of the output fields.
\end{abstract}

%%%%%%%%%%%%%%%%%%%%%%%%%%%%%%%%%%%%%%%%%%%%%%%%%%%%%%%%%%%%%%%%%%%%%%%%%%%%%%%%%%%%%%%%%%%%%%%%%%%%%%%%%%%%%%
%%%%%%%%%%%%%%%%%%%%%%%%%%%%%%%%%%%%%%%%%%%%%%%%%%%%%%%%%%%%%%%%%%%%%%%%%%%%%%%%%%%%%%%%%%%%%%%%%%%%%%%%%%%%%%
%%%%%%%%%%%%%%%%%%%%%%%%%%%%%%%%%%%%%%%%%%%%%%%%%%%%%%%%%%%%%%%%%%%%%%%%%%%%%%%%%%%%%%%%%%%%%%%%%%%%%%%%%%%%%%
\section{Introduction}
\label{sec:intro}
Entanglement is  a critical resource to realize applications of quantum information processing, such as quantum teleportation, quantum key distribution and entanglement swapping \cite{Yonezawa2004, Jouguet2013, Briegel1998}. 
Depending on whether the state of a system is described by discrete or continuous variables, entanglement of the system comes into two categories: discrete-variable entanglement and continuous-variable entanglement \cite{Horodecki2009, Bowen2004, Weedbrook2012}. 
This paper is concerned with Einstein-Podolsky-Rosen (EPR) entanglement which is continuous-variable entanglement generated between two continuous-mode Gaussian fields. EPR entanglement can be  efficiently prepared by a nondegenerate optical parametric amplifier (NOPA) \cite{Ou1992}.

The input/output block representation of a NOPA ($G_i$) is shown in Fig.~\ref{fig:single-NOPA}. The main component of the NOPA is a  cavity consisting of a nonlinear $\chi^{(2)}$ (second-order susceptibility) crystal with two frequency degenerate but polarization non-degenerate vacuum modes.  Exploiting a coherent pump beam (which can be regarded as an undepleted classical light) to shine the crystal, the pump beam interacts with the two vacuum modes and is split into a pair of outgoing fields, via a spontaneous parametric down-conversion process. Moreover, the outgoing fields are in the two-mode squeezed state, that is, they have correlated amplitude quadratures and anti-correlated phase quadratures \cite{Ou1992}. The degree of the two-mode squeezing is quantified by both spectra variances of the difference of the amplitude quadratures and the sum of the phase quadratures. The two beams are EPR entangled at frequency $\omega$ if the variances at $\omega$ are squeezed below the quantum shot-noise limit (SNL) \cite{Vitali2006}. EPR entanglement generated by a NOPA exists in the low frequency band, but not the whole band \cite{He2007}.

%%%%%%%%%%%%%%%%%%%%%%%%%%%%%%%%%%%%%%%%%%%%%%%%%%%%%%%%%%%%%%%%%%%%%%%%%%%%%%%%%%%%%%%%%%%%%%%%%%%%%%%%%%%%%%%%%%%%%%%%%%%%%%%%%%%%%%%%%%%%%%%%%%%%%%%%%%%%%%%%%%%%%%%%%%%%%%%%%%%%%%%%%%%%%%%%%%%%%%%%%%%%%%%%%%%%%%
%%%%%%%%%%%%%%%%%%%%%%%%%%%%%%%%%%%%%%%%%%%%%%%%%%%%%%%%%%%%%%%%%%%%%%%%%%%%%%%%%%%%%%%%%%%%%%%%%%%%%%%%%%%%%%%%%%%%%%%%%%%%%%%%%%%%%%%%%%%%%%%%%%%%%%%%%%%%%%%%%%%%%%%%%%%%%%%%%%%%%%%%%%%%%%%%%%%%%%%%%%%%%%%%%%%%%%%%%%%%%%%%
\begin{figure}[htbp]
\begin{center}
\includegraphics[scale=0.5]{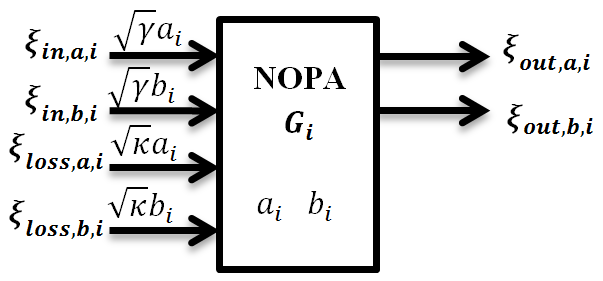}
\caption{Input/output block representation of a NOPA.}\label{fig:single-NOPA}
\end{center}
\end{figure}
%%%%%%%%%%%%%%%%%%%%%%%%%%%%%%%%%%%%%%%%%%%%%%%%%%%%%%%%%%%%%%%%%%%%%%%%%%%%%%%%%%%%%%%%%%%%%%%%%%%%%%%%%%%%%%%%%%%%%%%%%%%%%%%%%%%%%%%%%%%%%%%%%%%%%%%%%%%%%%%%%%%%%%%%%%%%%%%%%%%%%%%%%%%%%%%%%%%%%%%%%%%%%%%%%%%%%%%%%%%%%%%%

EPR entanglement can be enhanced by connecting NOPAs in a coherent feedback (CFB) scheme. Our previous work \cite{SN2015qip} shows that compared with a single NOPA system and a two-cascaded NOPA network, a dual-NOPA CFB system generates stronger EPR entanglement in both lossless and lossy cases.  In our recent work \cite{SN2015qic}, we extend the dual-NOPA CFB system to an $N$-NOPA  CFB chain as shown in Fig.~\ref{fig:N-NOPA-cfb}.  The NOPAs are identical and two adjustable phase shifters with phase shifts $\theta_a$ and $\theta_b$ are placed at fields $\xi_{out,a,N}$ and $\xi_{out,b,1}$ separately in order to obtain the best two-mode squeezing between fields $\xi_{out,a}$ and $\xi_{out,b}$. 
The EPR entanglement of the system can be shared by two distant communicating parties, say Alice and Bob. One outgoing field $\xi_{out,a}$ is sent to Alice, and the other one $\xi_{out,b}$ is sent to Bob. The NOPAs in the system can be either centralised in between Alice and Bob, or evenly deployed between the two parties as discussed in \cite{SN2015qic}. 
The work \cite{SN2015qic} gives stability conditions of the system in both lossless and lossy cases. Moreover, we investigated the EPR entanglement when the system is lossless with $N$ up to six, and found the values of $\theta_a$ and $\theta_b$ at 
which the two-mode squeezing is optimal. Via numerically analysis,  we qualitatively showed that a system employing more NOPAs achieves  a higher squeezing level,  under the same total pump power. However,  the study of EPR entanglement in the work \cite{SN2015qic} is limited to up to six NOPAs and the degree of two-mode squeezing is not formulated quantitatively. 
%%%%%%%%%%%%%%%%%%%%%%%%%%%%%%%%%%%%%%%%%%%%%%%%%%%%%%%%%%%%%%%%%%%%%%%%%%%%%%%%%%%%%%%%%%%%%%%%%%%%%%%%%%%%%%%%%%%%%%%%%%%%%%%%%%%%%%%%%%%%%%%%%%%%%%%%%%%%%%%%%%%%%%%%%%%%%%%%%%%%%%%%%%%%%%%%%%%%%%%%%%%%%%%%%%%%%%%%%%%%%%%%
\begin{figure*}[t!]
\begin{center}
\includegraphics[scale=0.4]{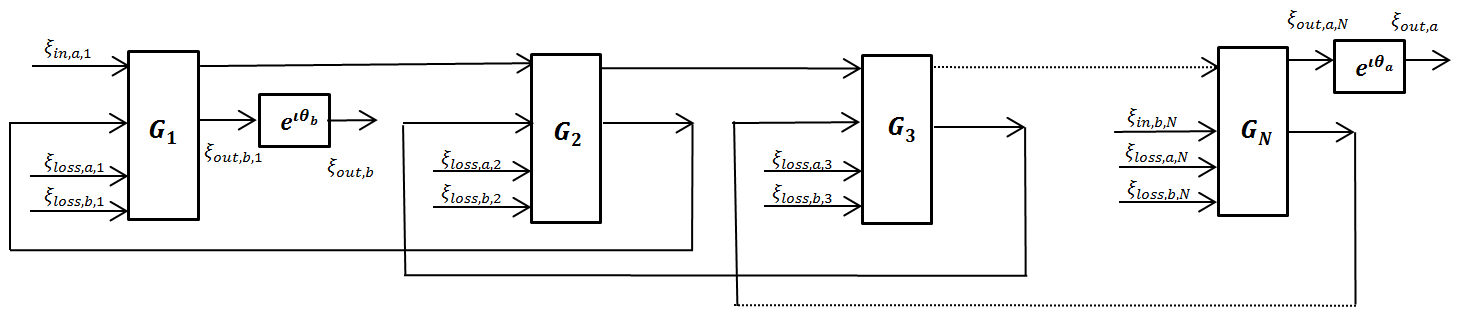}
\caption{An $N$-NOPA CFB system with  two adjustable phase shifters at the outputs.}\label{fig:N-NOPA-cfb}
\end{center}
\end{figure*}
%%%%%%%%%%%%%%%%%%%%%%%%%%%%%%%%%%%%%%%%%%%%%%%%%%%%%%%%%%%%%%%%%%%%%%%%%%%%%%%%%%%%%%%%%%%%%%%%%%%%%%%%%%%%%%%%%%%%%%%%%%%%%%%%%%%%%%%%%%%%%%%%%%%%%%%%%%%%%%%%%%%%%%%%%%%%%%%%%%%%%%%%%%%%%%%%%%%%%%%%%%%%%%%%%%%%%%%%%%%%%%%%

In this paper, we derive formulae for  EPR entanglement of the $N$-NOPA CFB system ($N\geq 2$) in the ideal case where losses are neglected. We do this for two reasons. First, the lossless scenario is important  as it sets the ultimate limit of EPR entanglement (two-mode squeezing) that can be achieved by the system. Secondly, the presence of losses makes the derivation of general formulae very complicated. Thus, we leave the general case with the presence of losses for future work. To further simplify the analysis, we consider an idealized limit where the system becomes static, that is, the NOPAs are approximated as static devices with infinite bandwidth \cite{Gough2009}. Nonetheless, the infinite bandwidth setting gives an accurate approximation to the two-mode squeezing in the low frequency region. In addition, we describe the CFB system as a special case of a more general $N$-NOPA feedback network containing a linear static passive subsystem, as shown in Fig.~\ref{fig:system}. Here, the CFB connection  of Fig.~\ref{fig:N-NOPA-cfb} is recovered with a special choice of the matrix $\tilde  S_N$ in Fig.~\ref{fig:system}. In general, the matrix $\tilde S_N$ represents an arbitrary static passive network consisting of   static passive optical devices such as beams splitters, phase shifters and mirrors; see, e.g., \cite{SN2015acc}. By analyzing the general system, we obtain a transfer matrix relating ingoing and outgoing fields.  Afterwards, we aim to find the values of $\theta_a$ and $\theta_b$ at which the two-mode squeezing of the general system corresponding to the $N$-NOPA CFB system is optimal. Furthermore, we formulate the optimal two-mode squeezing. Future research building on this work may  consider exploiting the general $N$-NOPA system to analyze other linear quantum optical systems in which the $N$-NOPAs  are interconnected or connected with passive components.

The rest of the paper is structured as follows.
Section~\ref{sec:prelim} introduces descriptions of linear quantum systems, EPR entanglement between two continuous-mode Gaussian fields, and linear transformations implemented by a NOPA in the infinite bandwidth limit.  
Section~\ref{sec:system-model} describes the $N$-NOPA general system as shown in Fig.~\ref{fig:system}. We analyze the stability of the general system in the finite bandwidth case and  formulate the transfer matrix of the system in the infinite bandwidth limit. 
Section~\ref{sec:chain} analyzes the general system corresponding to a lossless $N$-NOPA CFB system. We present the values of $\theta_a$ and $\theta_b$ at which the system achieves the best two-mode squeezing and give the formulae of the optimal squeezing.
Finally, we draw a conclusion in Section~\ref{sec:conclusion}. 
%%%%%%%%%%%%%%%%%%%%%%%%%%%%%%%%%%%%%%%%%%%%%%%%%%%%%%%%%%%%%%%%%%%%%%%%%%%%%%%%%%%%%%%%%%%%%%%%%%%%%%%%%%%%%%%%%%%%%%%%%%%%%%%%%%%%%%%%%%%%%%%%%%%%%%%%%%%%%%%%%%%%%%%%%%%%%%%%%%%%%%%%%%%%%%%%%%%%%%%%%%%%%%%%%%%%%%%%%%%%%%%%
\begin{figure}[htbp]
\begin{center}
\includegraphics[scale=0.9]{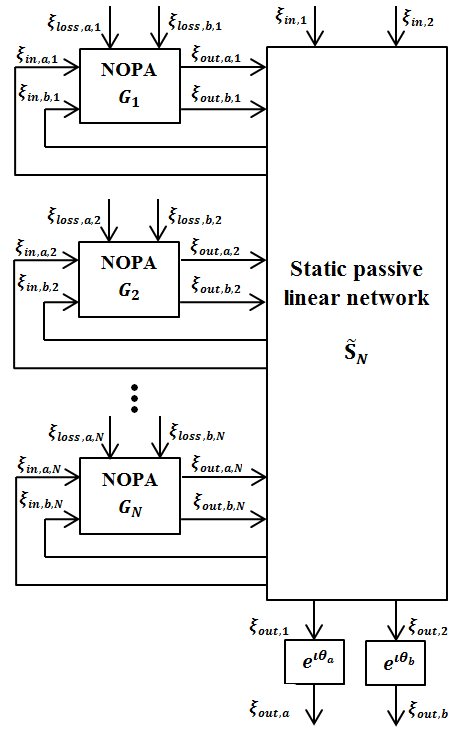}
\caption{A system consisting of $N$ NOPAs and a static passive linear subsystem with two inputs and two outputs.}\label{fig:system}
\end{center}
\end{figure}
%%%%%%%%%%%%%%%%%%%%%%%%%%%%%%%%%%%%%%%%%%%%%%%%%%%%%%%%%%%%%%%%%%%%%%%%%%%%%%%%%%%%%%%%%%%%%%%%%%%%%%%%%%%%%%%%%%%%%%%%%%%%%%%%%%%%%%%%%%%%%%%%%%%%%%%%%%%%%%%%%%%%%%%%%%%%%%%%%%%%%%%%%%%%%%%%%%%%%%%%%%%%%%%%%%%%%%%%%%%%%%%%

%%%%%%%%%%%%%%%%%%%%%%%%%%%%%%%%%%%%%%%%%%%%%%%%%%%%%%%%%%%%%%%%%%%%%%%%%%%%%%%%  
%%%%%%%%%%%%%%%%%%%%%%%%%%%%%%%%%%%%%%%%%%%%%%%%%%%%%%%%%%%%%%%%%%%%%%%%%%%%%%%%
%%%%%%%%%%%%%%%%%%%%%%%%%%%%%%%%%%%%%%%%%%%%%%%%%%%%%%%%%%%%%%%%%%%%%%%%%%%%%%%%
%%%%%%%%%%%%%%%%%%%%%%%%%%%%%%%%%%%%%%%%%%%%%%%%%%%%%%%%%%%%%%%%%%%%%%%%%%%%%%%%
\section{Preliminaries}
\label{sec:prelim}
The notations used in this paper are as follows: $\imath=\sqrt{-1}$. The conjugate of a matrix is denoted by $\cdot^\#$, $\cdot^T$ denotes the transpose of a matrix of numbers or operators and $\cdot^*$ denotes (i) the complex conjugate of a number, (ii) the conjugate transpose of a matrix, as well as (iii) the adjoint of an operator. $O_{m\times n}$ is an $m$ by $n$ zero matrix and $I_n$ is an $n$ by $n$ identity matrix (we simply write $O$ and $I$, if the dimensions of the matrix can be inferred from context). Trace operator is written as $\operatorname{Tr[\cdot]}$ and tensor product is $\otimes$.  $\delta(t)$ denotes the Dirac delta function. 
\subsection{Linear quantum systems}
\label{sec:linear_sys}
Here we consider an open linear quantum system without a scattering process. The linear system contains $n$-bosonic modes $a_j(t)~(j=1,\ldots, n)$ satisfying the commutation relations $[a_i(t), a_j(t)^*]=\delta_{ij}$, $m$-incoming boson fields $\xi_{in,i}(t)~(i=1,\ldots, m)$ in the vacuum state, which obey the commutation relations $[\xi_{in,j}(t), \xi_{in,j}(s)^*]=\delta(t-s)$, as well as two outgoing fields $\xi_{out,k}(t)~(k=1,2)$ which are Gaussian continuous-mode fields. A continuous-mode field means that the field contains a continuum of modes in a continuous range of frequencies.  
Note that a system may have more than two outputs. 
However, as we are only interested in the entanglement generated by a certain pair of outgoing fields, in this work we will only be interested in a particular pair of output fields, labelled ${out,1}$ and ${out,2}$ in the following.
The time-varying interaction Hamiltonian between the system and its environment is $H_{\rm int}(t) = \imath (\xi(t)^*L - L^* \xi(t))$, where $\xi(t)=[\xi_{in,1}(t),\ldots \xi_{in,m}(t)]^T $, $L=[L_1,\ldots, L_l]^T$ and $L_j  (j=1, 2, \cdots, l)$ is the $j$-th system coupling operator. In the Heisenberg picture, time evolutions of a mode $a_j$ and an outgoing field operator $\xi_{out,i}$ are \cite{bGardiner2004,Nurdin2009}:
\begin{eqnarray}
a_j(t)&=&U(t)^* a_j U(t),\nonumber\\
\xi_{out,i}(t)&=&U(t)^*\xi_{in,i}(t)U(t),
\end{eqnarray}
where $U(t)={\rm exp}^{\hspace{-0.5cm}\longrightarrow}~(-i\int_0^t H_{\rm int}(s)ds)$ is a unitary process obeying the quantum white noise Schr\"{o}dinger equation $\dot{U}(t)=-\imath H_{\rm int}(t)U(t)$. 
However, this is not an ordinary Schr\"{o}dinger equation as the interaction Hamiltonian $H_{int}(t)$  is a time-varying observable involving the singular quantum white noise processes $\xi(t)$.  This quantum white noise equation has to be interpreted correctly within the framework of quantum stochastic calculus, for details see \cite{bGardiner2004, inbBelavkin2008, bWiseman2010, Gough2003}.
Employing quantum stochastic calculus,  the dynamics of a linear quantum system is described by a quantum Langevin equation and can be written in the following form
\begin{eqnarray}
    \dot{z}(t)&=&Az(t)+B\xi(t), \label{eq:dynamics} \\
     \xi_{out}(t)&=&Cz(t)+D\xi(t). \label{eq:output}
\end{eqnarray}
where
\begin{eqnarray}
    z&=&(a_1^q, a_1^p, \ldots, a_n^q, a_n^p)^T, \nonumber\\
    \xi & =&(\xi_{1}^q, \xi_{1}^p, \ldots, \xi_{m}^q, \xi_{m}^p)^T, \nonumber\\
    \xi_{out}&=&(\xi_{out,1}^q, \xi_{out,1}^p, \xi_{out,2}^q, \xi_{out,2}^p)^T, \label{eq:vector}
\end{eqnarray}
with {\it quadratures} \cite{inbBelavkin2008,bWiseman2010}
\begin{eqnarray}
a_j^q &=& a_j+a_j^*, \quad a_j^p = (a_j-a_j^*)/i, \nonumber \\
\xi_j^q &=& \xi_j+\xi_j^*, \quad \xi_j^p = (\xi_j-\xi_j^*)/i. \label{eq: quadratures}
\end{eqnarray}
%The linear model described above is ubiquitous in fields such as quantum optics, optomechanics, and superconducting circuits, and are employed to describe the equations of motion for devices as diverse as optical cavities, optical parametric amplifiers, optical cavities with moving mirrors, cold atomic ensembles, and transmission line resonators, under appropriate assumptions on the system's parameters. 
  
\subsection{EPR entanglement between two continuous-mode fields}
\label{sec:entanglement}
We keep in mind that in this paper, we investigate EPR entanglement between two continuous-mode Gaussian fields rather than entanglement between two single-mode Gaussian fields. In the latter case, the degree of entanglement can be assessed via the logarithmic negativity as an entanglement measure, see, e.g., \cite{Laurat2005}. However, this measure is not directly applicable  to continuous-mode fields. Instead, the EPR entanglement of two freely propagating fields containing a continuum of modes, say $\xi_{out,1}$ and $\xi_{out,2}$, can be evaluated in the frequency domain by the two-mode squeezing spectra $V_+(\imath\omega)$  and $V_-(\imath\omega)$ \cite{Ou1992,Vitali2006}, that will be defined below.

The Fourier transform of $f(t)$ is defined as $F\left(\imath\omega\right)=\frac{1}{\sqrt{2\pi}}\int_{-\infty}^{\infty} f\left(t\right)e^{-\imath\omega t} dt$. Similarly, we have the Fourier transforms of $\xi_{out,1}(t)$, $\xi_{out,2}(t)$, $z(t)$ and $\xi(t)$ in (\ref{eq:dynamics}) and (\ref{eq:output}) as  $\tilde \Xi_{out,1}\left(\imath\omega\right)$, $\tilde \Xi_{out,2}\left(\imath\omega\right)$, $Z(\imath \omega)$  and  $\Xi(\imath \omega)$, respectively. 
Applying (\ref{eq:dynamics}), (\ref{eq:output}), we have
\begin{eqnarray}
\tilde \Xi_{out,1}^q(\imath \omega)+\tilde \Xi_{out,2}^q(\imath \omega)  
&=&\int_{-\infty}^{\infty} \xi_{out,1}^q(t)e^{-\imath\omega t} dt+\int_{-\infty}^{\infty} \xi_{out,2}^q(t)e^{-\imath\omega t}  dt  \nonumber \\
&=&  [1\ 0\ 1\ 0] \left(C Z\left(\imath\omega\right)+ D\Xi\left(\imath\omega\right)\right) , \nonumber\\
\tilde \Xi_{out,1}^p(\imath \omega)-\tilde \Xi_{out,2}^p(\imath \omega) 
&= & \int_{-\infty}^{\infty} \xi_{out,1}^p(t)e^{-\imath\omega t} dt-\int_{-\infty}^{\infty} \xi_{out,2}^p(t)e^{-\imath\omega t}  dt  \nonumber \\
&=& [0\ 1\ 0\ {-}1]\left(C Z\left(\imath\omega\right)+ D\Xi\left(\imath\omega\right)\right).
\end{eqnarray}

The two-mode squeezing spectra $V_{+}(\imath \omega)$ and $V_{-}(\imath \omega)$ are real functions defined via the identities
\begin{eqnarray}
\langle (\tilde \Xi_{out,1}^q(\imath \omega)+\tilde \Xi_{out,2}^q(\imath \omega))^* (\tilde \Xi_{out,1}^q(\imath \omega')+\tilde \Xi_{out,2}^q(\imath \omega')) \rangle = V_+(\imath \omega)\delta(\omega-\omega'), \nonumber \\
\langle (\tilde \Xi_{out,1}^p(\imath \omega)+\tilde \Xi_{out,2}^p(\imath \omega))^* (\tilde \Xi_{out,1}^p(\imath \omega')-\tilde \Xi_{out,2}^p(\imath \omega')) \rangle =  V_-(\imath \omega) \delta(\omega-\omega'),
\end{eqnarray}
where $\langle \cdot \rangle$ denotes quantum expectation. As described in \cite{Gough2010,Nurdin2012}, $V_+(\imath \omega)$ and $V_-(\imath \omega)$ are easily calculated by,
\begin{eqnarray}
V_+(\imath\omega)&=& {\rm Tr}\left[H^Q(\imath\omega)^* H^Q(\imath\omega)\right], \label{eq:V_+}\\
V_-(\imath\omega)&=& {\rm Tr}\left[H^P(\imath\omega)^* H^P(\imath\omega)\right], \label{eq:V_-}
\end{eqnarray}
where $H^Q=[1\ 0\ 1\ 0]H$, $H^P=[0\ 1\ 0\ {-}1]H$ and $H$ is the transfer function
\begin{eqnarray}
H(\imath\omega)=C\left(\imath\omega I-A \right)^{-1}B+D. \label{eq:transfer-function}
\end{eqnarray}
The fields $\xi_{out,1}$ and  $\xi_{out,2}$ are said to be EPR-entangled at the frequency $\omega$ rad/s is \cite{Vitali2006},
\begin{eqnarray}
V(\imath\omega) = V_+(\imath\omega)+V_-(\imath\omega)< 4, \label{eq:entanglement-criterion}
\end{eqnarray}
which indicates that the two-mode squeezing level is below the quantum shot-noise limit at $\omega$ rad/s \cite{Vitali2006}.

A perfect Einstein-Podolski-Rosen state is represented by an infinite bandwidth two-mode squeezing, that is $V(\imath\omega) = V_{\pm}(\imath\omega)= 0$ for all $\omega$. Of course, such an ideal EPR correlation cannot be achieved in reality as it would require an infinite amount of energy to produce. Thus, we aim to optimize EPR entanglement by making $V(\imath \omega)$ as small as possible over a wide frequency range \cite{Vitali2006}. 
Note that (\ref{eq:entanglement-criterion}) is a sufficient condition for EPR entanglement, with the two beams squeezed in amplitude and phase quadratures. However, in general, they may be squeezed in other quadratures. Hence, we give the following definition of EPR entanglement.
Let $\xi^{\psi_1}_{out,1}=e^{\imath \psi_1}\xi_{out,1}$, $\xi^{\psi_2}_{out,2}=e^{\imath \psi_2}\xi_{out,2}$ with $\psi_1, \psi_2 \in(-\pi,\pi]$ and denote the corresponding two-mode squeezing spectra between  $\xi^{\psi_1}_{out,1}$ and $\xi^{\psi_2}_{out,2}$ as $V^{\psi_1, \psi_2}_\pm(\imath\omega,\psi_1, \psi_2)$.
\begin{definition}
The fields $\xi_{out,1}$ and  $\xi_{out,2}$ are EPR entangled at the frequency $\omega$ rad/s if $\exists ~\psi_1, \psi_2 \in(-\pi,\pi]$ such that
\begin{eqnarray}
 V^{\psi_1, \psi_2}_+(\imath\omega,\psi_1, \psi_2)+V^{\psi_1, \psi_2}_-(\imath\omega,\psi_1, \psi_2)< 4. \label{eq:entanglement-criterion-2}
\end{eqnarray}
EPR entanglement is said to vanish at $\omega$ if there are no values of $\psi_1$ and $\psi_2$ satisfying the above criterion.
Unless otherwise specified, throughout the paper, EPR entanglement refers to the case with $\psi_1=\psi_2=0$.
\end{definition}

\subsection{The nondegenerate optical parametric amplifier (NOPA)}
\label{sec:NOPA}
A NOPA ($G_i$) is an open linear quantum system containing a cavity with a pair of orthogonally polarized bosonic modes $a_i$ and $b_i$ which satisfy  $[a_i, a_j^*]=\delta_{ij}$, $[b_i, b_j^*]=\delta_{ij}$, $[a_i, b_j^*]=0$ and $[a_i, b_j]=0$. By assuming a strong undepleted coherent pump beam onto the $\chi^{(2)}$ nonlinear crystal inside the cavity, the pump can be treated as a classical field (hence, quantum vacuum fluctuations are ignored) and the interaction of the modes $a_i$ and $b_i$ with the pump is modelled by the two-mode squeezing Hamiltonian $H= \frac{\imath}{2} \epsilon\left( a_i^* b_i^*- a_ib_i\right)$, where $\epsilon$ is a real coefficient relating to the effective amplitude of the pump beam,  see \cite{bGardiner2004, bCarmichael2008, bBachor2009}.

As shown in Fig.~\ref{fig:single-NOPA}, interactions between the NOPA and its environment are denoted by coupling operators as follows. Modes $a_i$ and $b_i$ are coupled to ingoing fields $\xi_{in,a,i}$ and $\xi_{in,b,i}$ via coupling operators $L_1=\sqrt{\gamma}a_i$ and $L_2=\sqrt{\gamma}b_i$, respectively. Unwanted amplification losses $\xi_{loss,a,i}$ and $\xi_{loss,b,i}$ impact the NOPA through operators $L_3=\sqrt{\kappa}a_i$ and $L_4=\sqrt{\kappa}b_i$, respectively. The constants $\gamma$ and $\kappa$ are damping rates of the outcoupling mirrors (from which the output fields emerge from the NOPA), and of the loss channels, respectively. 
From Section \ref{sec:linear_sys}, we have that the dynamics of the NOPA is given by\cite{Ou1992, Nurdin2009}
\begin{eqnarray}
\dot{a_i}\left(t\right)&=&-\left(\frac{\gamma+\kappa}{2}\right)a_i\left(t\right)+\frac{\epsilon}{2}b_i^*\left(t\right)-\sqrt{\gamma}\xi_{in,a,i}\left(t\right)-\sqrt{\kappa}\xi_{loss,a,i}\left(t\right),\nonumber \\
\dot{b_i}\left(t\right)&=&-\left(\frac{\gamma+\kappa}{2}\right)b_i\left(t\right)+\frac{\epsilon}{2}a_i^*\left(t\right)-\sqrt{\gamma}\xi_{in,b,i}\left(t\right)-\sqrt{\kappa}\xi_{loss,b,i}\left(t\right), \label{eq:NOPA-dynamics1}
\end{eqnarray}
following the boundary conditions \cite{Ou1992, bGardiner2004}, and we have outputs
\begin{eqnarray}
\xi_{out,a,i}\left(t\right)&=&\sqrt{\gamma}a_i\left(t\right)+\xi_{in,a,i}\left(t\right),\nonumber \\
\xi_{out,b,i}\left(t\right)&=&\sqrt{\gamma}b_i\left(t\right)+\xi_{in,b,i}\left(t\right). \label{eq:NOPA-dynamics2}
\end{eqnarray}

Define the following quadrature vectors of the NOPA, 
\begin{eqnarray}
z &=& [a^q_i, a^p_i, b^q_i, b^p_i]^T, \nonumber\\
\xi_{out}&=&[\xi^q_{out,a,i},\xi^p_{out,a,i},\xi^q_{out,b,i},\xi^p_{out,b,i}]^T,\nonumber \\
\xi &=&[\xi^q_{in,a,i},\xi^p_{in,a,i},\xi^q_{in,b,i},\xi^p_{in,b,i}, \xi^q_{loss,a,i},\xi^p_{loss,a,i},\xi^q_{loss,b,i},\xi^p_{loss,b,i}]^T.
\end{eqnarray}
From (\ref{eq:dynamics}), (\ref{eq:output}) and (\ref{eq:transfer-function}), the transfer function of the NOPA is
\begin{eqnarray}
H_{N}=
\left[\begin{array}{cccccccc}
h_1 & 0 & h_2 & 0 & h_3 & 0 & h_4 & 0 \\
0 & h_1 & 0 & -h_2 & 0 & h_3 & 0 & -h_4 \\
h_2 & 0 & h_1 & 0 & h_4 & 0 & h_3 & 0 \\
0 & -h_2 & 0 & h_1 & 0 & -h_4 & 0 & h_3
\end{array} \right],\label{eq:NOPAtf}
\end{eqnarray}
where $h_j~ (j=1,2,3,4)$ are functions of the frequency $\omega$,
\begin{eqnarray}
h_1(\imath \omega) =\frac{\epsilon^2+\gamma^2-(\kappa+2\imath\omega)^2}{\epsilon^2-(\gamma+\kappa+2\imath\omega)^2}, ~~h_2(\imath \omega) =\frac{2\epsilon\gamma}{\epsilon^2-(\gamma+\kappa+2\imath\omega)^2},\nonumber\\
h_3 (\imath \omega) =\frac{2\sqrt{\gamma\kappa}(\gamma+\kappa+2\imath\omega)}{\epsilon^2-(\gamma+\kappa+2\imath\omega)^2}, ~~h_4 (\imath \omega) =\frac{2\epsilon\sqrt{\gamma\kappa}}{\epsilon^2-(\gamma+\kappa+2\imath\omega)^2}. \label{eq:h_freq_dependent}
\end{eqnarray}

As reported in \cite{Nurdin2009,Iida2012}, parameters of the NOPA are set as follows. We set the reference value for the transmissivity rate of the mirrors $\gamma_{r}=7.2\times10^7$ Hz. The pump amplitude $\epsilon$ and damping rate $\gamma$ are adjustable as $\epsilon=x\gamma_r$ and $\gamma=\frac{\gamma_r}{y}$ respectively, where the variables $x$ and $y$ satisfy $0<x,y \leq 1$. We set $\kappa=K\epsilon$ with $K=\frac{3\times 10^6}{\sqrt{2}\times 0.6 \times \gamma_r}$ based on the assumption that the value of $\kappa$ is proportional to the absolute value of $\epsilon$ and $\kappa= \frac{3 \times 10^6}{\sqrt{2}}$ when $\epsilon=0.6 \gamma_r$. In the infinite bandwidth case where we take the limit $\gamma_r\rightarrow\infty$ while keeping $\epsilon$ and $\gamma$ at a fixed ratio $\frac{\epsilon}{\gamma}=\frac{x}{y}$, the transfer function of the NOPA in (\ref{eq:NOPAtf}) becomes a constant matrix with elements
\begin{eqnarray}
h_1&=&\frac{(1-K^2)(xy)^2+1}{(xy)^2-(1+Kxy)^2},~~h_2=\frac{2xy}{(xy)^2-(1+Kxy)^2},\nonumber\\
h_3&=&\frac{2\sqrt{Kxy}(1+Kxy)}{(xy)^2-(1+Kxy)^2},~~h_4=\frac{2xy\sqrt{Kxy}}{(xy)^2-(1+Kxy)^2}. \label{eq:h_coeffi_static}
\end{eqnarray}

It can be seen that for $\omega \ll \epsilon, \gamma, \kappa$, the constant scalar values of $h_1$ to $h_4$ given by (\ref{eq:h_coeffi_static}) in the infinite bandwidth limit approximates the frequency dependent values given in (\ref{eq:h_freq_dependent}) when the bandwidth is finite. Such an approximation is quite accurate for $\omega$ sufficiently small, away from $\epsilon, \gamma, \kappa$ (with no error at $\omega=0$). Since in practice the EPR entanglement will be in the low frequency region, entanglement in the idealised infinite bandwidth scenario provides a good approximation for the entanglement that can be expected in the finite bandwidth case.
%%%%%%%%%%%%%%%%%%%%%%%%%%%%%%%%%%%%%%%%%%%%%%%%%%%%%%%%%%%%%%%%%%%%%%%%%%%%%%%%
%%%%%%%%%%%%%%%%%%%%%%%%%%%%%%%%%%%%%%%%%%%%%%%%%%%%%%%%%%%%%%%%%%%%%%%%%%%%%%%%
%%%%%%%%%%%%%%%%%%%%%%%%%%%%%%%%%%%%%%%%%%%%%%%%%%%%%%%%%%%%%%%%%%%%%%%%%%%%%%%%
%%%%%%%%%%%%%%%%%%%%%%%%%%%%%%%%%%%%%%%%%%%%%%%%%%%%%%%%%%%%%%%%%%%%%%%%%%%%%%%%
%%%%%%%%%%%%%%%%%%%%%%%%%%%%%%%%%%%%%%%%%%%%%%%%%%%%%%%%%%%%%%%%%%%%%%%%%%%%%%%%
%%%%%%%%%%%%%%%%%%%%%%%%%%%%%%%%%%%%%%%%%%%%%%%%%%%%%%%%%%%%%%%%%%%%%%%%%%%%%%%%
%%%%%%%%%%%%%%%%%%%%%%%%%%%%%%%%%%%%%%%%%%%%%%%%%%%%%%%%%%%%%%%%%%%%%%%%%%%%%%%%
%%%%%%%%%%%%%%%%%%%%%%%%%%%%%%%%%%%%%%%%%%%%%%%%%%%%%%%%%%%%%%%%%%%%%%%%%%%%%%%%
\section{$N$-NOPA general system}
\label{sec:system-model}
We consider a linear quantum system consisting of $N$ ($N\geq 2$) NOPAs and a linear static passive network as shown in Fig.~\ref{fig:system}. The static passive linear network has $2(N+1)$ ingoing fields, among which two inputs $\xi_{in,1}$ and $\xi_{in,2}$ are in the vacuum state and the others are the outgoing fields of the NOPAs. Similarly, the network has $2(N+1)$ outgoing fields, $2N$ of which are inputs of the NOPAs. The other outputs $\xi_{out,1}$ and $\xi_{out,2}$ are linked to two adjustable phase shifters with phase shifts $\theta_a$ and $\theta_b$, respectively. The EPR entanglement of interest is  between fields  $\xi_{out,a}$ and $\xi_{out,b}$.

The transformation implemented by the static subsystem is denoted by a  complex unitary matrix  $\tilde{S}_N \in \mathbb{C}^{ 2(N+1)}$ such that
\begin{eqnarray}
\left[ \begin{array}{c}
\xi_{out,1}\\ \xi_{out,2}\\ \xi_{in,a,1} \\ \xi_{in,b,1} \\ \vdots \\ \xi_{in,a,N} \\ \xi_{in,b,N} \end{array} \right] 
= \tilde{S}_N \left[ \begin{array}{c}
\xi_{in,1}\\ \xi_{in,2}\\ \xi_{out,a,1} \\ \xi_{out,b,1} \\ \vdots \\ \xi_{out,a,N} \\ \xi_{out,b,N} \end{array} \right], \label{eq:StN}
\end{eqnarray}
and
\begin{equation}
\tilde{S}_N^* \tilde{S}_N = \tilde{S}_N\tilde{S}_N^*= I_{2(N+1)}.
 \label{eq:complex-unitary}
\end{equation} 
In particular, the matrix $\tilde{S}_N$ corresponding to the $N$-NOPA CFB system as shown in Fig.~\ref{fig:N-NOPA-cfb} is denoted as
\begin{eqnarray}
\tilde{S}_{N} ^{cfb}= \left[\begin{array}{cc} O_{2N \times 2} & I_{2N} \otimes M_1\\ M_1 & O_{2 \times 2N}\end{array}\right]
                     +\left[\begin{array}{cc} O_{2 \times 2N} & M_2  \\ I_{2N} \otimes M_2&O_{2N \times 2} \end{array}\right],
\label{eq:ScfbN} \end{eqnarray}
where
\begin{eqnarray}
M_1=\left[\begin{array}{cc} 0&0\\0&1\end{array}\right],~~~M_2=\left[\begin{array}{cc} 1&0\\0&0\end{array}\right]. 
\label{eq:jdju} \end{eqnarray}

Now we define the following quadrature vectors
\begin{eqnarray}
z&=&[a^q_1, a^p_1, b^q_1, b^p_1,\cdots, a^q_N, a^p_N, b^q_N, b^p_N]^T,\nonumber \\
\xi_{in}&=&[\xi^q_{in,a,1},\xi^p_{in,a,1},\xi^q_{in,b,1},\xi^p_{in,b,1}, \cdots, \xi^q_{in,a,N},\xi^p_{in,a,N},\xi^q_{in,b,N},\xi^p_{in,b,N}]^T, \nonumber\\
\xi_{loss}&=&[\xi^q_{loss,a,1},\xi^p_{loss,a,1},\xi^q_{loss,b,1},\xi^p_{loss,b,1}, \cdots,\xi^q_{loss,a,N},\xi^p_{loss,a,N},\xi^q_{loss,b,N},\xi^p_{loss,b,N}]^T, \nonumber\\
\xi_{out}&=&[\xi^q_{out,a,1},\xi^p_{out,a,1},\xi^q_{out,b,1},\xi^p_{out,b,1}, \cdots,\xi^q_{out,a,N},\xi^p_{out,a,N},\xi^q_{out,b,N},\xi^p_{out,b,N}]^T, \nonumber\\
\xi^{(i)}&=&[\xi^q_{in,1},\xi^p_{in,1},\xi^q_{in,2},\xi^p_{in,2}]^T, \nonumber\\
\xi^{(o)}&=&[\xi^q_{out,1},\xi^p_{out,1},\xi^q_{out,2},\xi^p_{out,2}]^T, \nonumber\\
\xi &=&[{\xi^{(i)}}^T,\xi_{loss}^T]^T. \label{eq:sys_quadratures}
\end{eqnarray}
Following (\ref{eq: quadratures}), the quadrature form of $\tilde{S}_N$ is defined as
\begin{align}
S_N=\frac{1}{2}\tilde{K}_N\tilde{S}_N\tilde{K}_N^* + \frac{1}{2}\tilde{K}_N^\# \tilde{S}_N^\#\tilde{K}_N^T,  \label{eq:relations-real-complex-matrix}
\end{align}
where $\tilde{K}_N= I_{2(N+1)} \otimes \left[ \begin{array}{cc} 1 & -\imath  \end{array} \right]^T$.
Note that $S_N$ is a real unitary symplectic matrix. That is,
\begin{eqnarray}
S_N^TS_N=S_NS_N^T=I_{4(N+1)}, ~~~~S_N^T \mathbb{J}_{2(N+1)} S_N =\mathbb{J}_{2(N+1)} \label{eq:unitary_symplectic}
\end{eqnarray}
where $\mathbb{J}_{2(N+1)}=I_{2(N+1)}\otimes \left[ \begin{array}{cc} 0 & 1 \\ -1 & 0 \end{array} \right] $. 

%%%%%%%%%%%%%%%%%%%%%%%%%%%%%%%%%%%%%%%%%%%%%%%%%%%%%%%%%%%%%%%%%%%%%%%%%%%%%%%%%%%%%%%%%%%%%%%%%%%%%
%%%%%%%%%%%%%%%%%%%%%%%%%%%%%%%%%%%%%%%%%%%%%%%%%%%%%%%%%%%%%%%%%%%%%%%%%%%%%%%%%%%%%%%%%%%%%%%%%%%%%%
%%%%%%%%%%%%%%%%%%%%%%%%%%%%%%%%%%%%%%%%%%%%%%%%%%%%%%%%%%%%%%%%%%%%%%%%%%%%%%%%%%%%%%%%%%%%%%%%%%%%%%
\subsection{Stability} \label{sec:stability}
Though we analyze the $N$-NOPA  general system in the infinite bandwidth case where stability is not an issue as the system is assumed static, we have to guarantee that the system in practice (the finite bandwidth case) is stable, that is, the average total number of photons in the cavity modes does not continuously grow.  To investigate stability, we write the dynamics the $N$-NOPA  general system in the finite bandwidth case in the form
\begin{eqnarray}
\dot{z}(t)&=&A_N z(t)+B_N \xi(t).
 \label{eq:N-NOPA-dynamics_standard}
\end{eqnarray}
Then, the system is sable if and only if $A_N$ is Hurwitz (all the eigenvalues of the matrix have real negative parts).

For convenience, we write $S_N$ in the form 
\begin{eqnarray}
S_N=\left[\begin{array}{cc} S_{N,11} &S_{N,12}\\S_{N,21}&S_{N,22} \end{array}\right],\label{eq:S_sub_matrice_form}
\end{eqnarray}
where $S_{N,11} \in \mathbb{R}^{4 \times 4} $, $S_{N,21} \in \mathbb{R}^{4N \times 4} $,$S_{N,12} \in \mathbb{R}^{4 \times 4N} $
and $S_{N,22} \in \mathbb{R}^{4N \times 4N} $
According to (\ref{eq:NOPA-dynamics1}), (\ref{eq:NOPA-dynamics2}) and (\ref{eq:sys_quadratures}), we have
\begin{eqnarray}
\dot{z}(t)&=&\left(I_N \otimes A_1\right)z(t)-\sqrt{\gamma}\xi_{in}(t)-\sqrt{\kappa}\xi_{loss}(t), \nonumber\\
\xi_{out}&=&\sqrt{\gamma}z(t)+\xi_{in}(t),
 \label{eq:N-NOPA-dynamics}
\end{eqnarray}
where
\begin{eqnarray}
A_1=\left[\begin{array}{cccc} -\frac{\gamma+\kappa}{2} &0&\frac{\epsilon}{2}&0\\
0&-\frac{\gamma+\kappa}{2} &0&-\frac{\epsilon}{2}\\
\frac{\epsilon}{2}&0&-\frac{\gamma+\kappa}{2} &0\\
0&-\frac{\epsilon}{2}&0&-\frac{\gamma+\kappa}{2}\\ \end{array}\right]. \label{eq:A1}
\end{eqnarray}
 Substituting (\ref{eq:S_sub_matrice_form}) into (\ref{eq:StN}),  we obtain
 \begin{eqnarray}
\xi_{in}=S_{N,21}\xi^{(i)}+S_{N,22} \xi_{out}, 
 \label{eq:xi_in}
\end{eqnarray}
where $\xi_{in}$ is as given in  (\ref{eq:sys_quadratures}). From (\ref{eq:N-NOPA-dynamics}) and (\ref{eq:xi_in}), we have
 \begin{eqnarray}
A_N=I_N \otimes A_1- \gamma (I_{4N}-S_{N,22})^{-1}  S_{N,22}. \label{eq:A_N}
\end{eqnarray}
Thus, the closed-loop $N$-NOPA general system is well-posed if $(I_{4N}-S_{N,22})$ is invertible \cite{Gough2008}. 
In addition, the system is stable if and only if the matrix $(I_N \otimes A_1- \gamma (I_{4N}-S_{N,22})^{-1} S_{N,22})$ is Hurwitz. 
In particular, a detailed sufficient and necessary stability condition of the $N$-NOPA CFB system has been developed  in \cite{SN2015qic}.
%%%%%%%%%%%%%%%%%%%%%%%%%%%%%%%%%%%%%%%%%%%%%%%%%%%%%%%%%%%%%%%%%%%%%%%%%%%%%%%%
%%%%%%%%%%%%%%%%%%%%%%%%%%%%%%%%%%%%%%%%%%%%%%%%%%%%%%%%%%%%%%%%%%%%%%%%%%%%%%%%
\subsection{Transformation of static $N$-NOPA system} \label{sec:transformation}
Now we investigate the static transformation $H_N$ ($\xi^{(o)}=H_N\xi$) implemented by the $N$-NOPA general system in the infinite bandwidth case. By exploiting the static transfer matrices of a NOPA and the static subsystem given by (\ref{eq:NOPAtf}), (\ref{eq:h_coeffi_static}) and (\ref{eq:StN}), we obtain
\begin{eqnarray}
H_N&=& \left(S_{N,11}+S_{N,12}\left(I_N \otimes W_{12}\right)P_N S_{N,21}\right)\left[\begin{array}{cc}I_{4}&O_{4\times 4N}\end{array}\right]   \nonumber\\
&& + S_{N,12}\left(I_N \otimes W_{34}\right)\left[\begin{array}{cc}O_{4N \times 4}&I_{4N} \end{array}\right]\nonumber\\
 &&+ \left(S_{N,12}\left(I_N \otimes W_{12}\right)P_N S_{N,22}\left(I_N \otimes W_{34}\right) \right)\left[\begin{array}{cc}O_{4N \times 4}&I_{4N} \end{array}\right] \nonumber
\end{eqnarray}
where
\begin{eqnarray}
P_N &=& \left(I_{4N}-S_{N,22}\left(I_N \otimes W_{12}\right)\right)^{-1}, \nonumber \\
W_{12}&=&\left[\begin{array}{cc}h_1I_2&h_2R\\h_2R&h_1I_2\end{array}\right],~~W_{34}=\left[\begin{array}{cc}h_3I_2&h_4R\\h_4R&h_3I_2\end{array}\right], \nonumber\\
R &=&\left[\begin{array}{cc}1&0\\0&-1\end{array} \right].\label{eq:H}
\end{eqnarray}

As shown in (\ref{eq:H}), the condition that the matrix $H_N$ is well-defined is that the matrix $\left(I_{4N}-S_{N,22}\left(I_N \otimes W_{12}\right)\right)$ is invertible.
The lemma below shows that  $H_N$ is well defined as long as the general system is stable.
\begin{lemma}\label{lemma:PI_invertible}
The matrix $\left(I_{4N}-S_{N,22}\left(I_N \otimes W_{12}\right)\right)$ is invertible if the $N$-NOPA general  system is stable.
\end{lemma}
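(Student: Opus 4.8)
The plan is to reduce the invertibility of $I_{4N}-S_{N,22}(I_N\otimes W_{12})$ to the nonsingularity of the finite-bandwidth dynamics matrix $A_N$, which stability already guarantees. Since stability means $A_N$ in (\ref{eq:A_N}) is Hurwitz, $0$ is not an eigenvalue of $A_N$, so $\det A_N\neq0$; moreover the very definition of $A_N$ presupposes that $(I_{4N}-S_{N,22})$ is invertible (well-posedness), so $\det(I_{4N}-S_{N,22})\neq0$ may be taken as given. I would therefore derive a determinant identity that exhibits $\det A_N$ as a nonzero scalar multiple of $\det\big(I_{4N}-S_{N,22}(I_N\otimes W_{12})\big)$, forcing the latter to be nonzero.

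The algebraic bridge between the infinite-bandwidth matrix $W_{12}$ and the finite-bandwidth generator $A_1$ is the identity $W_{12}=I_4+\gamma A_1^{-1}$, equivalently $\gamma\,(I_N\otimes A_1)^{-1}=(I_N\otimes W_{12})-I_{4N}$. I would establish this by noting that $A_1$ in (\ref{eq:A1}) splits into two $2\times2$ blocks (coupling the $q_a,q_b$ and the $p_a,p_b$ quadratures), so $A_1^{-1}$ is available in closed form; its invertibility is equivalent to $(\gamma+\kappa)^2\neq\epsilon^2$, which is exactly the condition that the denominators of $h_1,\dots,h_4$ do not vanish, i.e. that $W_{12}$ is well defined in the first place. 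Comparing $I_4+\gamma A_1^{-1}$ entrywise with (\ref{eq:H}) reproduces precisely $h_1(0),h_2(0)$, and since the static coefficients (\ref{eq:h_coeffi_static}) coincide with the finite-bandwidth values (\ref{eq:h_freq_dependent}) evaluated at $\omega=0$ (the approximation is exact at $\omega=0$, as noted after (\ref{eq:h_coeffi_static})), this yields $I_4+\gamma A_1^{-1}=W_{12}$ with the static $h_j$. This step is the conceptual heart of the argument: it is what lets the infinite-bandwidth object $W_{12}$ appear in a statement about finite-bandwidth stability.

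With this in hand I would factor the determinant. Writing $\mathcal A:=I_N\otimes A_1$ and $S_{22}:=S_{N,22}$, we have $A_N=\mathcal A-\gamma(I_{4N}-S_{22})^{-1}S_{22}$, so
\[
\det A_N=\det(\mathcal A)\,\det\!\big(I_{4N}-\gamma\mathcal A^{-1}(I_{4N}-S_{22})^{-1}S_{22}\big).
\]
Substituting $\gamma\mathcal A^{-1}=(I_N\otimes W_{12})-I_{4N}$ and using the elementary identity $(I_{4N}-S_{22})^{-1}S_{22}=(I_{4N}-S_{22})^{-1}-I_{4N}$, the second factor simplifies, after collecting terms, to $\big(I_{4N}-(I_N\otimes W_{12})S_{22}\big)(I_{4N}-S_{22})^{-1}$. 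Taking determinants gives
\[
\det A_N=\frac{\det(\mathcal A)}{\det(I_{4N}-S_{22})}\,\det\!\big(I_{4N}-(I_N\otimes W_{12})S_{22}\big),
\]
and Sylvester's identity $\det(I-XY)=\det(I-YX)$ turns the last factor into $\det\big(I_{4N}-S_{22}(I_N\otimes W_{12})\big)$. Since $\det A_N\neq0$, $\det(\mathcal A)=(\det A_1)^N\neq0$, and $\det(I_{4N}-S_{22})\neq0$, the remaining factor must be nonzero, so $I_{4N}-S_{N,22}(I_N\otimes W_{12})$ is invertible.

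The main obstacle is the bookkeeping in the two simplification steps: verifying $W_{12}=I_4+\gamma A_1^{-1}$ from the block structure of $A_1$, and collecting terms to obtain the factorization of the return-difference determinant. Neither is deep, but both must be carried out carefully. In particular one must confirm that the static coefficients are genuinely the $\omega=0$ limits of the finite-bandwidth coefficients, since this is the only place where the infinite-bandwidth hypothesis enters and it is precisely what legitimises using $W_{12}$ in an argument about the finite-bandwidth matrix $A_N$.
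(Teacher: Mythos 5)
Your proposal is correct and follows essentially the same route as the paper: both arguments use stability to conclude $\det A_N\neq 0$ (evaluating the Hurwitz condition at $\lambda=0$), invoke the key identity $W_{12}=I_4+\gamma A_1^{-1}$ linking the static gain to the finite-bandwidth generator, and then massage determinants (the paper multiplies by $\det(I_{4N}-S_{N,22})$ and $\det(I_N\otimes A_1^{-1})$ directly, while you factor out $\det(I_N\otimes A_1)$ and finish with Sylvester's identity $\det(I-XY)=\det(I-YX)$ — a cosmetic difference). Your explicit verification that the static $h_1,h_2$ are the $\omega=0$ values, and that $\det A_1\neq 0$ is exactly the condition for $W_{12}$ to be well defined, is a point the paper asserts without comment.
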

\begin{proof}
When the system is well-posed and stable, from Section~\ref{sec:stability}, we have  
\begin{eqnarray}
\det \left(\lambda I_{4N}-(I_N \otimes A_1- \gamma (I_{4N}-S_{N,22})^{-1}  S_{N,22})\right) \neq 0, ~ \forall ~{\rm real}(\lambda) \geq 0. \nonumber
\end{eqnarray}
Thus, at $\lambda=0$,
\begin{eqnarray}
\det \left(I_N \otimes A_1- \gamma (I_{4N}-S_{N,22})^{-1}  S_{N,22}\right) \neq 0.
\end{eqnarray}
Moreover, as $\det(A_1)=\frac{1}{16}\left(\epsilon^2- \left(\gamma+\kappa \right)^2\right) \neq 0$, we have
\begin{eqnarray}
&& \det\left(I_{4N}-S_{N,22}\right) \det \left(I_N \otimes A_1- \gamma (I_{4N}-S_{N,22})^{-1}  S_{N,22}\right) \det \left(I_N \otimes A_1 \right) \nonumber\\
&=&\det\left(I_{4N}-S_{N,22}-\gamma S_{N,22} \left( I_N \otimes A_1^{-1} \right) \right) \neq 0.
\end{eqnarray}

Following (\ref{eq:H}) and (\ref{eq:S_sub_matrice_form}), we have
\begin{eqnarray}
&&\det\left(I_{4N}-S_{N,22}\left(I_N \otimes W_{12}\right)\right) \nonumber \\
&=&\det\left(I_{4N}-S_{N,22}\left(I_N \otimes \left(I_{4}+\gamma A_1^{-1}\right)\right)\right)\nonumber \\
&=&\det\left(I_{4N}-S_{N,22}-\gamma S_{N,22}\left(I_N \otimes A_1^{-1}\right)\right).
 \label{eq:Pinverse}
\end{eqnarray}
The proof is completed.
\end{proof}
%%%%%%%%%%%%%%%%%%%%%%%%%%%%%%%%%%%%%%%%%%%%%%%%%%%%%%%%%%%%%%%%%%%%%%%%%%%%%%%%
%%%%%%%%%%%%%%%%%%%%%%%%%%%%%%%%%%%%%%%%%%%%%%%%%%%%%%%%%%%%%%%%%%%%%%%%%%%%%%%%
%%%%%%%%%%%%%%%%%%%%%%%%%%%%%%%%%%%%%%%%%%%%%%%%%%%%%%%%%%%%%%%%%%%%%%%%%%%%%%%%
%%%%%%%%%%%%%%%%%%%%%%%%%%%%%%%%%%%%%%%%%%%%%%%%%%%%%%%%%%%%%%%%%%%%%%%%%%%%%%%%
%%%%%%%%%%%%%%%%%%%%%%%%%%%%%%%%%%%%%%%%%%%%%%%%%%%%%%%%%%%%%%%%%%%%%%%%%%%%%%%%
%%%%%%%%%%%%%%%%%%%%%%%%%%%%%%%%%%%%%%%%%%%%%%%%%%%%%%%%%%%%%%%%%%%%%%%%%%%%%%%%
%%%%%%%%%%%%%%%%%%%%%%%%%%%%%%%%%%%%%%%%%%%%%%%%%%%%%%%%%%%%%%%%%%%%%%%%%%%%%%%%
%%%%%%%%%%%%%%%%%%%%%%%%%%%%%%%%%%%%%%%%%%%%%%%%%%%%%%%%%%%%%%%%%%%%%%%%%%%%%%%%
\section{EPR entanglement}
\label{sec:chain}
This section analyzes EPR entanglement generated by a lossless  ($\kappa=0$) $N$-NOPA CFB system. From (\ref{eq:V_+}) and (\ref{eq:V_-}), we define the two-mode squeezing spectra $V_\pm (\imath \omega, \theta_a, \theta_b)$ as  a function of $\omega$, $\theta_a$ and $\theta_b$. In the infinite bandwidth limit, $V_\pm (\imath \omega, \theta_a, \theta_b)$ is independent of $\omega$, that is, they have the same value for all $\omega$ at fixed values of $\theta_a$ and $\theta_b$. Thus, we denote the two-mode squeezing in the infinite bandwidth limit as $V_\pm (\theta_a, \theta_b)$. 
We aim to find values of the phase shifts $\theta_a$ and $\theta_b$ at which the two-mode squeezing of the system is optimal, that is, the two-mode squeezing spectra $V_\pm (\theta_a, \theta_b)$ in the infinite bandwidth limit is minimized. After that, we formulate the degree of the optimal two-mode squeezing, denoted as $V_\pm$.  Before the analysis, we give the following definition and two useful propositions.
\begin{definition} \label{def:L2-matrix}
Let $E=\left[E_{i,j}\right]$ denote a block matrix whose sub-matrix is $E_{i,j} \in \mathbb{R}^{2\times 2}$, where $1\leq i, j \leq 2N$. The matrix $E$ is an $L^\curvearrowright _{2\times 2}$-matrix if $E$ satisfies
\begin{enumerate}
\item  \begin{eqnarray}
E_{i,j}=\left\{ \begin{array}{ll}
e_{ij}I_2 & {\rm for}~i+j~{\rm is~even}\\ 
e_{ij}R & {\rm for}~i+j~{\rm is~odd},
\end{array}\right.  \label{eq:L2matrix_property1}
\end{eqnarray}
where $e_{i,j} \in \mathbb{R}$; 
\item 
\begin{eqnarray} 
E_{i,j}=E_{2N+1-i,2N+1-j}, \label{eq:L2matrix_property2}
\end{eqnarray} 
for $1 \leq i \leq 2N$ and $N+1 \leq j \leq 2N$.
\end{enumerate}
\end{definition}
\begin{proposition} \label{propo:1}
If $E=\left[E_{i,j}\right]$ and $F=\left[F_{i,j}\right]$ $\left(1\leq i, j \leq 2N \right)$ are $L^\curvearrowright _{2\times 2}$-matrices, then $G=EF$ is also an $L^\curvearrowright _{2\times 2}$-matrix.
\end{proposition}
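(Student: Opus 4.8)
The plan is to verify that the product $G = EF$ satisfies the two defining conditions of Definition~\ref{def:L2-matrix} in turn, working throughout with the block entries $G_{i,j} = \sum_{k=1}^{2N} E_{i,k} F_{k,j}$. The algebraic engine behind both parts is the multiplication table of the two admissible block types: since $R^2 = I_2$, $\,I_2 R = R I_2 = R$ and $I_2^2 = I_2$, the product of a real multiple of $I_2$ or $R$ with another such matrix is again a real multiple of $I_2$ or $R$, the resulting type being $I_2$ when the two factors share a type and $R$ when they differ. Before starting I would record that condition~2 of Definition~\ref{def:L2-matrix}, although stated only for $N+1 \le j \le 2N$, in fact holds for all column indices: if $1 \le j \le N$ then $2N+1-j \in \{N+1,\ldots,2N\}$, and applying the stated relation to that pair yields $E_{i,j} = E_{2N+1-i,2N+1-j}$ for every $1 \le i,j \le 2N$ (and likewise for $F$), a fact I will use freely in the second part.

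For condition~1, I would fix $i$ and $j$ and inspect a single summand $E_{i,k} F_{k,j}$. The block $E_{i,k}$ is of type $I_2$ or $R$ according to the parity of $i+k$, and $F_{k,j}$ according to that of $k+j$. Their product is of type $I_2$ exactly when $i+k$ and $k+j$ have the same parity; since $(i+k)+(k+j) = i+j+2k$, this occurs precisely when $i+j$ is even, \emph{independently of} $k$. Hence, for fixed $i,j$, every summand is a real multiple of one and the same matrix --- $I_2$ if $i+j$ is even, $R$ if $i+j$ is odd --- so the sum $G_{i,j}$ is again a real multiple of that matrix, which is exactly condition~1.

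For condition~2, I would compute $G_{2N+1-i,\,2N+1-j}$ and reindex the sum by $k \mapsto 2N+1-k$, which merely permutes the summation range $\{1,\ldots,2N\}$. The centrosymmetry of $E$ and $F$ then gives $E_{2N+1-i,\,2N+1-k} = E_{i,k}$ and $F_{2N+1-k,\,2N+1-j} = F_{k,j}$, so the reindexed sum collapses to $\sum_{k} E_{i,k} F_{k,j} = G_{i,j}$, establishing $G_{2N+1-i,\,2N+1-j} = G_{i,j}$. Equivalently, one may phrase this through the block-exchange matrix $\Pi$ whose anti-diagonal blocks are $I_2$: condition~2 reads $M = \Pi M \Pi$ with $\Pi^2 = I$, and the centrosymmetry of a product is then immediate from $(\Pi E \Pi)(\Pi F \Pi) = \Pi E F \Pi$.

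I do not expect a genuine obstacle here, as the content is essentially bookkeeping. The one place demanding care is the parity accounting in condition~1 --- specifically the observation that the type of each summand $E_{i,k}F_{k,j}$ is independent of $k$, which is precisely what keeps the termwise sum inside the span of a single basis matrix. The only other subtlety is the preliminary remark that condition~2 may be invoked for all index pairs, not merely those with $N+1 \le j \le 2N$; without it the reindexing step in the second part would not close.
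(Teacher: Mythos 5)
Your proof is correct and follows essentially the same route as the paper's: a termwise parity count (using $R^2=I_2$) for condition~1, and the index reversal $k\mapsto 2N+1-k$ combined with the centrosymmetry of $E$ and $F$ for condition~2. The only difference is presentational — the paper splits the sum into the ranges $k\le N$ and $k>N$ so as to invoke condition~2 only where it is literally stated, whereas you first observe (correctly) that the condition is involutive and hence holds for all index pairs, which lets you reindex the whole sum at once.
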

\begin{proof}
Let $G=\left[G_{i,j}\right]$, where $G_{i,j}\in \mathbb{R}^{2\times 2}$. We obtain that $G_{i,j}=\sum\limits_{k=1}^{2N}E_{i,k}F_{k,j}$.  
As $E$ and $F$ are $L^\curvearrowright _{2\times 2}$-matrices, from (\ref{eq:L2matrix_property1}), it is easy to check that
\begin{eqnarray}
G_{i,j}=\left\{ \begin{array}{ll}
g_{ij}I_2 & {\rm for}~i+j~{\rm is~even}\\ 
g_{ij}R & {\rm for}~i+j~{\rm is~odd},
\end{array}\right.  \label{eq:Gij}
\end{eqnarray}
where $g_{i,j}=\sum\limits_{k=1}^{2N}e_{i,k}f_{k,j}$.  
Moreover, when  $1 \leq i \leq 2N$ and $N+1 \leq j \leq 2N$,
\begin{eqnarray}
G_{2N+1-i,2N+1-j}&= &\sum\limits_{k=1}^{2N}E_{2N+1-i,k}F_{k,2N+1-j} \nonumber\\
&=&\sum\limits_{k=1}^{N}E_{2N+1-i,k}F_{k,2N+1-j}+\sum\limits_{k=N+1}^{2N}E_{i,2N+1-k}F_{k,2N+1-j}, \nonumber
\end{eqnarray}
\begin{eqnarray}
G_{i,j}&=& \sum\limits_{k=1}^{2N}E_{i,k}F_{k,j}  \nonumber\\
&=&\sum\limits_{k=1}^{N}E_{i,k}F_{2N+1-k,2N+1-j} +\sum\limits_{k=N+1}^{2N}E_{2N+1-i,2N+1-k}F_{2N+1-k,2N+1-j}. \nonumber
\end{eqnarray}
Let $k'=2N+1-k$, and then
\begin{eqnarray}
G_{i,j}= \sum\limits_{k'=N+1}^{2N}E_{i,2N+1-k'}F_{k',2N+1-j}+\sum\limits_{k'=1}^{N}E_{2N+1-i,k'}F_{k',2N+1-j}. \nonumber
\end{eqnarray}
Therefore, $G_{i,j}=G_{2N+1-i,2N+1-j}$ for $1 \leq i \leq 2N$ and $N+1 \leq j \leq 2N$.
Thus, the matrix $G$ satisfies both properties (\ref{eq:L2matrix_property1}) and (\ref{eq:L2matrix_property2}). The proof is completed.
\end{proof}

\begin{proposition} \label{propo:2}
Let $E=\left[E_{i,j}\right]$ be an invertible $L^\curvearrowright _{2\times 2}$-matrix, then $F=E^{-1}$ is also an $L^\curvearrowright _{2\times 2}$-matrix.
\end{proposition}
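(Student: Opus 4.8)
The plan is to recognize the set of $L^\curvearrowright_{2\times 2}$-matrices as a unital subalgebra of the real matrix algebra $\mathbb{R}^{4N\times 4N}$, and then to invoke the Cayley--Hamilton theorem to express $E^{-1}$ as a polynomial in $E$, which therefore inherits the $L^\curvearrowright_{2\times 2}$ structure.

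First I would verify that the defining conditions (\ref{eq:L2matrix_property1}) and (\ref{eq:L2matrix_property2}) are preserved under real linear combinations. Since the $(i,j)$ block of $\alpha E+\beta F$ is $\alpha E_{i,j}+\beta F_{i,j}$, property (\ref{eq:L2matrix_property1}) is preserved because a real linear combination of multiples of $I_2$ is again a multiple of $I_2$, and likewise for $R$; property (\ref{eq:L2matrix_property2}) is preserved because it is a linear relation among the blocks. Hence the $L^\curvearrowright_{2\times 2}$-matrices form a real vector subspace of $\mathbb{R}^{4N\times 4N}$. Next I would check that the identity $I_{4N}$ is an $L^\curvearrowright_{2\times 2}$-matrix: its $(i,j)$ block is $\delta_{ij}I_2$, so the diagonal blocks ($i+j$ even) equal $1\cdot I_2$ and the off-diagonal blocks equal $0$, satisfying (\ref{eq:L2matrix_property1}), while $\delta_{ij}=\delta_{2N+1-i,\,2N+1-j}$ gives (\ref{eq:L2matrix_property2}). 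Combined with Proposition \ref{propo:1}, which already establishes closure under multiplication, this shows that the $L^\curvearrowright_{2\times 2}$-matrices constitute a unital subalgebra $\mathcal{A}\subseteq\mathbb{R}^{4N\times 4N}$.

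The decisive step is then purely algebraic. Let $p(\lambda)=\lambda^{4N}+c_{4N-1}\lambda^{4N-1}+\cdots+c_1\lambda+c_0$ be the characteristic polynomial of $E$, whose coefficients are real, and note that the constant term satisfies $c_0=\det(E)\neq 0$ because $E$ is invertible (the dimension $4N$ is even). By the Cayley--Hamilton theorem $p(E)=0$, whence
\begin{eqnarray}
E^{-1}=-\frac{1}{c_0}\left(E^{4N-1}+c_{4N-1}E^{4N-2}+\cdots+c_1 I_{4N}\right). \nonumber
\end{eqnarray}
The right-hand side is a real polynomial in $E$, so it lies in $\mathcal{A}$ because $\mathcal{A}$ is closed under multiplication, addition and real scalar multiplication and contains $I_{4N}$. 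Therefore $F=E^{-1}$ is an $L^\curvearrowright_{2\times 2}$-matrix.

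I do not anticipate a substantive obstacle: the only points requiring care are the routine bookkeeping that $I_{4N}$ satisfies both (\ref{eq:L2matrix_property1}) and (\ref{eq:L2matrix_property2}), and the observation that invertibility forces $c_0\neq 0$ so that the polynomial expression for $E^{-1}$ is legitimate. A more elementary but longer alternative would be to mimic the explicit block-index manipulations of Proposition \ref{propo:1} directly on the entries of $E^{-1}$ via a cofactor expansion; the subalgebra together with Cayley--Hamilton argument is preferable since it sidesteps all such computation.
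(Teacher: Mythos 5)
Your proof is correct, and it takes a genuinely different route from the paper's. The paper posits that $F=E^{-1}$ has the $L^\curvearrowright_{2\times 2}$ form with unknown scalars $f_{i,j}$, translates $EF=I_{4N}$ into a linear system $\left(I_N\otimes\tilde{E}\right)f=g$ for those scalars, and establishes solvability by conjugating $E$ with a permutation matrix $U$ so that $UEU^{T}$ block-diagonalizes as $\tilde{E}\oplus\hat{E}$, whence $\det(E)\neq 0$ forces $\det(\tilde{E})\neq 0$ and Cramer's rule delivers the entries of $F$. You instead observe that the $L^\curvearrowright_{2\times 2}$-matrices form a unital subalgebra of $\mathbb{R}^{4N\times 4N}$ --- both defining conditions (\ref{eq:L2matrix_property1}) and (\ref{eq:L2matrix_property2}) are linear in the blocks, $I_{4N}$ satisfies them, and Proposition~\ref{propo:1} gives closure under products --- and then express $E^{-1}$ as a real polynomial in $E$ via Cayley--Hamilton, the constant term $c_0=\det(E)$ being nonzero by invertibility. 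Your argument is shorter and more robust: it sidesteps the paper's entry-level bookkeeping and, in particular, the point the paper glosses over, namely that after Cramer's rule produces the scalars one must still confirm that the resulting ansatz matrix genuinely satisfies all $16N^2$ scalar equations of $EF=I_{4N}$, not just the $2N^2$ selected ones (and the paper's closing step deducing $FE=I_{4N}$ becomes automatic in your framing). What the paper's computation buys in exchange is the explicit reduced matrix $\tilde{E}$ and a concrete linear system for the entries of $E^{-1}$, which is stylistically of a piece with the determinant manipulations used later in the Appendix; your argument is purely structural and yields no such formula, but none is needed for Proposition~\ref{propo:2} itself.
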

\begin{proof}
Suppose $F=[F_{i,j}]$ is an $L^\curvearrowright _{2\times 2}$-matrix, with  
\begin{eqnarray}
F_{i,j}=\left\{ \begin{array}{ll}
f_{ij}I_2 & {\rm for}~i+j~{\rm is~even}\\ 
f_{ij}R & {\rm for}~i+j~{\rm is~odd},
\end{array}\right. \nonumber
\end{eqnarray}
where $f_{i,j} \in \mathbb{R}$. According to Proposition~\ref{propo:1}, let $G=[G_{i,j}]=EF=I_{4N}$ ($1\leq i,j \leq 2N$), with 
\begin{eqnarray}
G_{i,j}=\left\{ \begin{array}{ll}
g_{ij}I_2 & {\rm for}~i+j~{\rm is~even}\\ 
g_{ij}R & {\rm for}~i+j~{\rm is~odd}.
\end{array}\right. \nonumber
\end{eqnarray}
Moreover, as  $G=I_{4N}$, we have $g_{i,j}=1$ for $i=j$ and $g_{i,j}=0$ for $i \neq j$. Let
\begin{eqnarray}\tilde{E}= \left[\begin{array}{cccc:ccc}
e_{1,1}&e_{1,2}&\cdots & e_{1, N}&e_{2N,N}   &\cdots     & e_{2N, 1}\\
e_{2,1}&e_{2,2}&\cdots & e_{2, N}&e_{2N-1,N}&\cdots & e_{2N-1,1}\\
\vdots & \vdots &\ddots &  \vdots & \vdots &\ddots & \vdots \\
e_{2N,1}&e_{2N,2}&\cdots & e_{2N, N}&e_{1,N}&\cdots & e_{1,1}
\end{array}\right]. \label{eq:tildeE} \nonumber
\end{eqnarray}
According to Definition~\ref{def:L2-matrix}, we only need to prove that
\begin{eqnarray}
\left(I_{N}  \otimes  \tilde{E} \right)
\left[ \begin{array}{c}
f_{1,1}\\
\vdots\\
f_{2N,1}\\
f_{1,2}\\
\vdots\\
f_{2N,2}\\
\vdots\\
f_{1,N}\\
\vdots\\
f_{2N,N}
\end{array}\right]
=
\left[ \begin{array}{c}
g_{1,1}\\
\vdots\\
g_{2N,1}\\
g_{1,2}\\
\vdots\\
g_{2N,2}\\
\vdots\\
g_{1,N}\\
\vdots\\
g_{2N,N}
\end{array}\right]
\end{eqnarray}
has a unique solution, which requires $\det\left(I_{N}  \otimes  \tilde{E} \right) \neq 0$, that is $\det(\tilde{E}_{i,j})\neq 0$,  based on the Cramer's rule. 
Define a permutation matrix 
\begin{eqnarray}
U=\left[\begin{array}{c}
I_{2N}  \otimes  \left[\begin{array}{cc} 1&0 \end{array}\right]\\
I_{2N}  \otimes  \left[\begin{array}{cc} 0&1 \end{array}\right]\\
 \end{array}\right], \label{eq:U}
\end{eqnarray}
then
\begin{eqnarray}
UEU^{T}= \left[\begin{array}{cc}
\tilde{E}&\\
&\hat{E}
\end{array}\right],  \label{eq:decomposition}
\end{eqnarray}
where 
\begin{eqnarray}
\hat{e}_{i,j} =\left\{ \begin{array}{ll}
\tilde{e}_{i,j} & {\rm for}~i+j~{\rm is~even}\\ 
-\tilde{e}_{i,j} & {\rm for}~i+j~{\rm is~odd},
\end{array}\right. 
\end{eqnarray}
and $\hat{e}_{i,j}$ and $\tilde{e}_{i,j}$ denote the elements at the $i_{th}$ row  and the $j_{th}$ column  of $ \hat{E}$ and $\tilde{E}$, respectively.
Since $\det(E) \neq 0$ and $\det(U)\neq 0$, we have $\det(\tilde{E})\neq 0$.  Therefore, $EF=I_{4N}$. Furthermore, we have $EFE=I_{4N}E$. Therefore $FE=E^{-1}EFE=E^{-1}I_{4N}E=I_{4N}$. The proof is completed.
\end{proof}

The following theorem gives values of the phase shifts $\theta_a$ and $\theta_b$ at which the two-mode squeezing of the system is optimal, as well as the formulae of the optimal two-mode squeezing spectra.
\begin{theorem} \label{theorem}
Let 
\begin{eqnarray}
&&\Upsilon=\frac{h_1^N}{\left(h_1h_2m_{N-1}+n_{N-1}-h_2^2n_{N-1} \right)^2\prod\limits_{k=0}^{N-2}n_k}\nonumber\\
&&\times \left(h_1h_2^2m_{N-1}-h_1^3 m_{N-1}+h_2n_{N-1}-h_2^3n_{N-1}+h_1^2h_2n_{N-1} \right). \nonumber
\end{eqnarray}
In the lossless case, the two-mode squeezing of a stable $N$-NOPA CFB system ($N \geq 2$) as shown in Fig.~\ref{fig:N-NOPA-cfb} is optimal when $\theta_a$ and $\theta_b$ ($-\pi<\theta_a,\theta_b \leq \pi$) satisfy
\begin{eqnarray}
|\theta_a+\theta_b|=\pi,~ {\rm for}~ \Upsilon >0; \nonumber\\
\theta_a+\theta_b=0 ~ {\rm or}~ \theta_a=\theta_b=\pi ~ {\rm for}~ \Upsilon <0; \nonumber\\
-\pi<\theta_a,\theta_b \leq \pi, ~ {\rm for}~ \Upsilon =0. \nonumber
\end{eqnarray}
The optimal two-mode squeezing $V_\pm$ in the infinite bandwidth limit are
\begin{eqnarray}
V_\pm = \left\lbrace  \begin{array}{c}
2\left(-h_2  +\frac{h_1^N- \left(h_1^2h_2n_{N-1}-h_1^3m_{N-1}\right)\prod\limits_{k=0}^{N-2}n_k}{\left(h_1h_2m_{N-1}+n_{N-1}-h_2^2n_{N-1} \right)\prod\limits_{k=0}^{N-2}n_k}\right)^2  ~ {\rm for}~ \Upsilon \geq 0;\\
2\left(h_2  +\frac{h_1^N+ \left(h_1^2h_2n_{N-1}-h_1^3m_{N-1}\right)\prod\limits_{k=0}^{N-2}n_k}{\left(h_1h_2m_{N-1}+n_{N-1}-h_2^2n_{N-1} \right)\prod\limits_{k=0}^{N-2}n_k}\right)^2~ {\rm for}~ \Upsilon < 0.
\end{array} \right. 
%&=&  2\left(\frac{(-1)^N 4 n_{N-1}\left(\prod\limits_{k=0}^{N-2}n_k\right) ((xy)^2-1)^{N-1}+(-1)^{N+1}m_{N-1}\left(\prod\limits_{k=0}^{N-2}n_k\right) ((xy)^2-1)^{N-1}(1+(xy)^2)+(1+(xy)^2)^N}{\left(\prod\limits_{k=0}^{N-2}n_k\right) ((xy)^2-1)^{N-2} \left(2m_{N-1} xy(1+ (xy)^2) + n_{N-1} (1 -6 (xy)^2 + (xy)^4)\right)}\right)^2, 
 \label{eq:N-cfb-Vpm} \nonumber
\end{eqnarray}
where
$n_0=1$, $n_1=1$, $n_{k+1}=1-h_2^2+\frac{h_1h_2m_{k}}{n_{k}}$, $m_1=0$, $m_{k+1}=-h_1h_2+\frac{h_1^2m_{k}}{n_{k}}$,  $1\leq k \leq N-2$, $h_1=\frac{(xy)^2+1}{(xy)^2-1}$ and $h_2=\frac{2xy}{(xy)^2-1}$. 
\end{theorem}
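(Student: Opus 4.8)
The plan is to reduce the two-parameter optimisation to a single scalar calculation: first put the lossless chain's $4\times4$ input--output matrix in closed form, then minimise the resulting trigonometric expression for $V_\pm$. First I would specialise the general formula (\ref{eq:H}) to the lossless CFB chain. Setting $\kappa=0$ in (\ref{eq:h_coeffi_static}) gives $h_3=h_4=0$, so $W_{34}=O$ and $W_{12}=\left[\begin{smallmatrix} h_1 I_2 & h_2 R\\ h_2 R & h_1 I_2\end{smallmatrix}\right]$ with $h_1,h_2$ as in the theorem; the loss columns of $H_N$ then drop out and only the $4\times4$ block
\[
H_N^{(i)}=S_{N,11}+S_{N,12}\,(I_N\otimes W_{12})\,P_N\,S_{N,21},\qquad P_N=\bigl(I_{4N}-S_{N,22}(I_N\otimes W_{12})\bigr)^{-1}
\]
mapping $\xi^{(i)}$ to $\xi^{(o)}$ survives. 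Next I would compute the four blocks of $S_N$ in (\ref{eq:S_sub_matrice_form}) by applying (\ref{eq:relations-real-complex-matrix}) to the explicit $\tilde S_N^{cfb}$ of (\ref{eq:ScfbN}); because $\tilde S_N^{cfb}$ couples only nearest neighbours in the chain through the diagonal blocks $M_1,M_2$, the resulting $S_{N,22}$ is block-banded and, together with $W_{12}$, is an $L^{\curvearrowright}_{2\times2}$-matrix in the sense of Definition~\ref{def:L2-matrix}.

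The heart of the argument is the evaluation of $P_N$. By Proposition~\ref{propo:1} the product $S_{N,22}(I_N\otimes W_{12})$, and hence $I_{4N}-S_{N,22}(I_N\otimes W_{12})$, is again an $L^{\curvearrowright}_{2\times2}$-matrix, and by Proposition~\ref{propo:2} so is $P_N$; this guarantees that every $2\times2$ block of $P_N$ is a scalar multiple of $I_2$ or of $R$, so it suffices to track scalars. I would then exploit the block-tridiagonal (chain) structure to invert $I_{4N}-S_{N,22}(I_N\otimes W_{12})$ by a Schur-complement / block-elimination recursion that peels off one NOPA at a time; this is exactly the step that produces the auxiliary sequences $n_k$ and $m_k$, with $n_{k+1}=1-h_2^2+h_1h_2 m_k/n_k$ and $m_{k+1}=-h_1h_2+h_1^2 m_k/n_k$ arising as the successive pivots and off-diagonal fill-ins, and $\prod_{k=0}^{N-2} n_k$ appearing as the accumulated determinant. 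I expect this recursive inversion, together with the bookkeeping needed to show it closes in the $L^{\curvearrowright}_{2\times2}$ class with the palindromic symmetry (\ref{eq:L2matrix_property2}), to be the main obstacle; Propositions~\ref{propo:1}--\ref{propo:2} are precisely the tools that make it tractable.

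Assembling $H_N^{(i)}$ from these scalars, the $\mathbb{Z}_2$ grading carried by the $I_2/R$ blocks (and the fact that a lossless system realises a symplectic Bogoliubov transformation) forces the two-mode-squeezer form $H_N^{(i)}=\left[\begin{smallmatrix} a I_2 & c R\\ c R & a I_2\end{smallmatrix}\right]$, where I anticipate
\[
a=\frac{h_1^N}{\bigl(h_1h_2m_{N-1}+n_{N-1}-h_2^2n_{N-1}\bigr)\prod_{k=0}^{N-2}n_k},\qquad
c=h_2+\frac{h_1^2h_2 n_{N-1}-h_1^3 m_{N-1}}{h_1h_2m_{N-1}+n_{N-1}-h_2^2n_{N-1}}.
\]
The symplectic constraint $a^2-c^2=1$ serves as a useful internal consistency check on the recursion.

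Finally I would append the output phase shifters as a block-diagonal rotation $\Phi=\mathrm{diag}\bigl(R(\theta_a),R(\theta_b)\bigr)$ and evaluate (\ref{eq:V_+})--(\ref{eq:V_-}) for $H=\Phi H_N^{(i)}$. Since $H^Q=[1\ 0\ 1\ 0]H$ and $H^P=[0\ 1\ 0\ {-}1]H$ are single rows, $V_\pm$ are squared Euclidean norms, and a direct expansion collapses all the $\theta_a,\theta_b$ dependence into
\[
V_+=V_-=2a^2+2c^2+4ac\,\cos(\theta_a+\theta_b).
\]
Minimising over $(-\pi,\pi]^2$ then depends only on the sign of $ac$, which I expect to simplify exactly to $ac=\Upsilon$. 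Hence for $\Upsilon>0$ one takes $\cos(\theta_a+\theta_b)=-1$, i.e. $|\theta_a+\theta_b|=\pi$, giving $V_\pm=2(a-c)^2$; for $\Upsilon<0$ one takes $\cos(\theta_a+\theta_b)=+1$, i.e. $\theta_a+\theta_b=0$ or $\theta_a=\theta_b=\pi$, giving $V_\pm=2(a+c)^2$; and for $\Upsilon=0$ the value is $\theta$-independent. Substituting the expressions for $a$ and $c$ into $2(a\mp c)^2$ reproduces the two stated closed forms and completes the proof.
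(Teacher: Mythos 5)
Your proposal is correct and follows essentially the same route as the paper's own proof: reduce to the lossless $4\times4$ block, use Propositions~\ref{propo:1}--\ref{propo:2} to establish that $P_N$ is an $L^{\curvearrowright}_{2\times 2}$-matrix forcing the two-mode-squeezer form of $H_N$, generate $m_k,n_k$ via Schur-complement recursions on the chain structure, and minimise $2(u^2+v^2+2uv\cos(\theta_a+\theta_b))$ according to the sign of $\Upsilon=uv$. The only cosmetic difference is that the paper extracts the two needed entries $p_{3,1}$ and $p_{4N-3,1}$ by Cramer's rule on cofactor matrices $T_{1,N},T_{2,N},T_{3,N}$ rather than by your full block elimination of the inverse, but the recursions and results are identical.
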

\begin{proof}
See the Appendix.
\end{proof}

Theorem~\ref{theorem} shows that the optimal two-mode squeezing of the coherent feedback system does not consistently 
exist in a certain pair of quadrature. That is, the pair of quadratures that has the best two-mode squeezing depends on the number of NOPAs.
Moreover, the optimized two-mode squeezing $V_+$ and $V_-$ are identical in the same pair of quadratures for different values of $N$. 

It is easy to check that we obtain the same results of the optimal two-mode squeezing and the corresponding values of $\theta_a$ and $\theta_b$ as given in Section~4.1.1 of \cite{SN2015qic}, when $N=2,3,4,5,6$.
In addition, Fig.~\ref{fig:compare} compares the optimal two-mode squeezing of $N$-NOPA CFB systems with $2 \leq N \leq 10$, under the same total pump power in the absence of losses and delays. 
We denote the parameters $x$ and $y$ of an $N$-NOPA CFB system as $x_N$ and $y_N$. We set $y_N=1$ for all $2 \leq N \leq 10$, $x_{10}=0.078$ that is close to the value of stability threshold (see \cite{SN2015qic}), and $x_i=(\sqrt{10/i})x_{10}$ for $2\leq i \leq 9$. 
The squeezing in the plot is presented in dB unit, that is, $V_\pm ({\rm dB})=10{\rm log}_{10}V_{\pm} $. It is shown that the system with more NOPAs generates better entanglement, under the same total pump power.
%The values of $\theta_a$ and $\theta_b$ corresponding to the optimal squeezing are $\theta_a+\theta_b=0$ or $\theta_a=\theta_b=\pi$ when $N$ is even, and  $|\theta_a+\theta_b|=\pi$ when $N$ is odd.  

%%%%%%%%%%%%%%%%%%%%%%%%%%%%%%%%%%%%%%%%%%%%%%%%%%%%%%%%%%%%%%%%%%%%%%%%%%%%%%%%%%%%%%%%%%%%%%%%%%%%%%%%%%%%%%%%%%%%%%%%%%%%%%%%%%%%%%%%%%%%%%%%%%%%%%%%%%%%%%%%%%%%%%%%%%%%%%%%%%%%%%%%%%%%%%%%%%%%%%%%%%%%%%%%%%%%%%
%%%%%%%%%%%%%%%%%%%%%%%%%%%%%%%%%%%%%%%%%%%%%%%%%%%%%%%%%%%%%%%%%%%%%%%%%%%%%%%%%%%%%%%%%%%%%%%%%%%%%%%%%%%%%%%%%%%%%%%%%%%%%%%%%%%%%%%%%%%%%%%%%%%%%%%%%%%%%%%%%%%%%%%%%%%%%%%%%%%%%%%%%%%%%%%%%%%%%%%%%%%%%%%%%%%%%%%%%%%%%%%%
\begin{figure}[htbp]
\begin{center}
\includegraphics[scale=0.6]{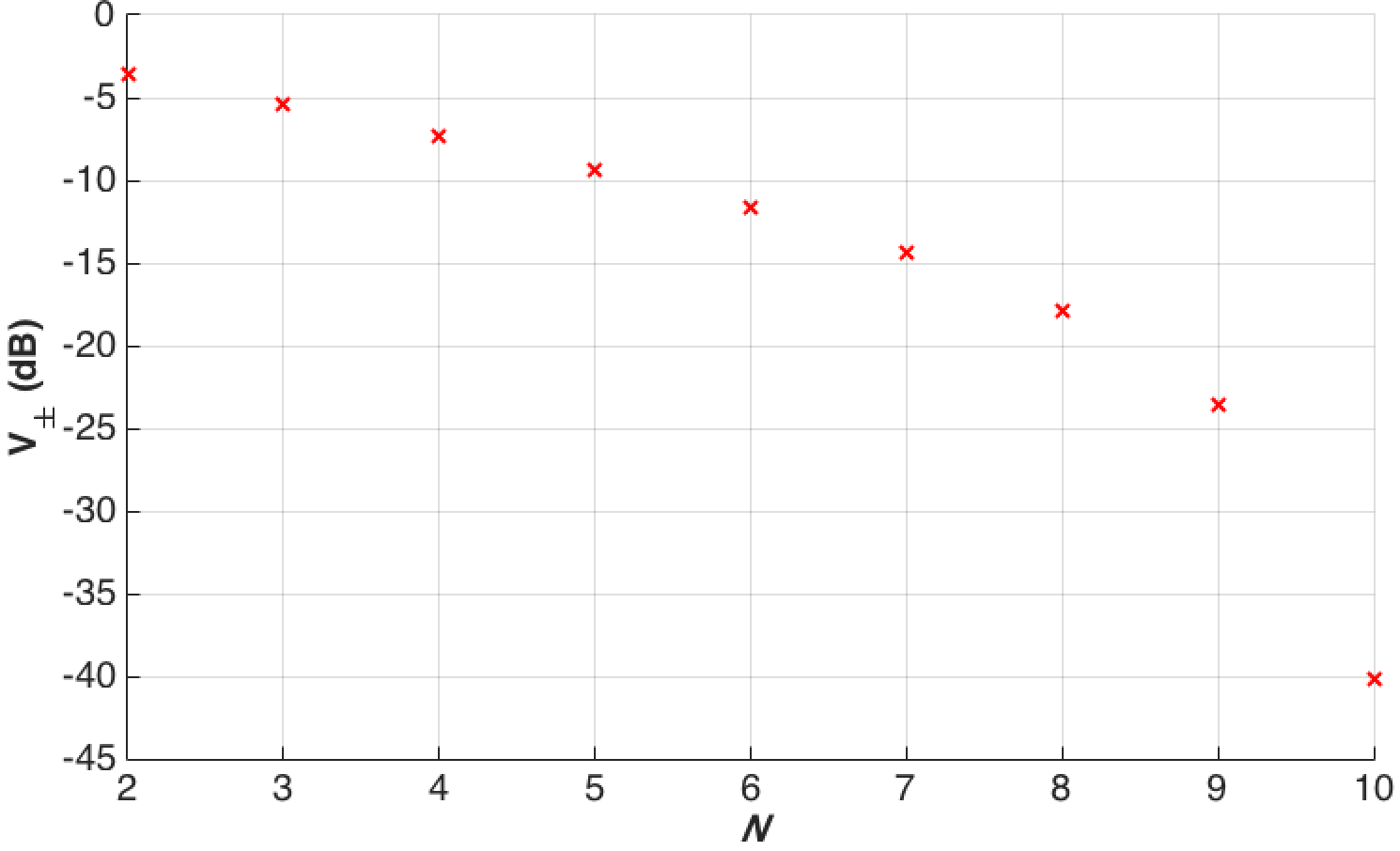}
\caption{Values of $V_{\pm} (\rm{dB})$ of $N$-NOPA systems ($2 \leq N \leq 10$) in the absence of losses and delays, under the same total pump power, with $x_{10}=0.13$, $x_i=(\sqrt{10/i})x_{10}$ ($i=\lbrace2,3,4,5,6,7,8,9\rbrace$) and $y=1$.}\label{fig:compare}
\end{center}
\end{figure}
%%%%%%%%%%%%%%%%%%%%%%%%%%%%%%%%%%%%%%%%%%%%%%%%%%%%%%%%%%%%%%%%%%%%%%%%%%%%%%%%%%%%%%%%%%%%%%%%%%%%%%%%%%%%%%%%%%%%%%%%%%%%%%%%%%%%%%%%%%%%%%%%%%%%%%%%%%%%%%%%%%%%%%%%%%%%%%%%%%%%%%%%%%%%%%%%%%%%%%%%%%%%%%%%%%%%%%%%%%%%%%%%

%%%%%%%%%%%%%%%%%%%%%%%%%%%%%%%%%%%%%%%%%%%%%%%%%%%%%%%%%%%%%%%%%%%%%%%%%%%%%%%%
%%%%%%%%%%%%%%%%%%%%%%%%%%%%%%%%%%%%%%%%%%%%%%%%%%%%%%%%%%%%%%%%%%%%%%%%%%%%%%%%
%%%%%%%%%%%%%%%%%%%%%%%%%%%%%%%%%%%%%%%%%%%%%%%%%%%%%%%%%%%%%%%%%%%%%%%%%%%%%%%%
%%%%%%%%%%%%%%%%%%%%%%%%%%%%%%%%%%%%%%%%%%%%%%%%%%%%%%%%%%%%%%%%%%%%%%%%%%%%%%%%
%%%%%%%%%%%%%%%%%%%%%%%%%%%%%%%%%%%%%%%%%%%%%%%%%%%%%%%%%%%%%%%%%%%%%%%%%%
%%%%%%%%%%%%%%%%%%%%%%%%%%%%%%%%%%%%%%%%%%%%%%%%%%%%%%%%%%%%%%%%%%%%%%%%%%%%%%%%
\section{Conclusion}
\label{sec:conclusion}
This paper has derived formulae for EPR entanglement of a lossless $N$-NOPA coherent feedback system with two adjustable phase shifters $\theta_a$ and $\theta_b$ placed at the two outputs of the system.  We describe the coherent feedback system by a more general $N$-NOPA system containing a static passive linear network in the infinite bandwidth limit, which gives a good approximation to the EPR entanglement in the low frequency region.  Here, the passive network represents the coherent feedback connections. 
When losses are neglected, we give the values of $\theta_a$ and $\theta_b$ where the best two-mode squeezing is achieved, and derive formulae for the optimal two-mode squeezing.

As this work is limited to the case where losses are neglected, future study is required to formulate two-mode squeezing of the system under effects of losses. 
Moreover, quantitative analysis of EPR entanglement of the system  as $N$ approches $\infty$  may be of interest.
Furthermore, the $N$-NOPA general system in the infinite bandwidth limit can be further exploited to analyze other $N$-NOPA systems where the NOPAs are interconnected in a certain way or connected by a passive linear optical network consisting of passive optical devices.
%%%%%%%%%%%%%%%%%%%%%%%%%%%%%%%%%%%%%%%%%%%%%%%%%%%%%%%%%%%%%%%%%%%%%%%%%%%%%%%%
%%%%%%%%%%%%%%%%%%%%%%%%%%%%%%%%%%%%%%%%%%%%%%%%%%%%%%%%%%%%%%%%%%%%%%%%%%%%%%%%
%%%%%%%%%%%%%%%%%%%%%%%%%%%%%%%%%%%%%%%%

\section*{APPENDIX} 
The below is the proof of Theorem~\ref{theorem}.

\begin{proof}
\textbf{Part 1:}  Let $\tilde{S}_N= \tilde{S}_N^{cfb}$.  In the ideal case ($\kappa=0$), from (\ref{eq:h_coeffi_static}), (\ref{eq:ScfbN}) and (\ref{eq:H}), we obtain the transfer matrix of the $N$-NOPA CFB system as
\begin{eqnarray}
&&H_{N}=\left(S_{N,11}+S_{N,12}\left(I_N \otimes W_{12}\right)P_{N} S_{N,21}\right)\left[\begin{array}{cc}I_{4}&O_{4\times 4N}\end{array}\right]   \nonumber\\ 
&&=\left[\begin{array}{ccccc}\left(M_1\otimes I_2 \right)&O_4&\cdots & O_4 &\left(M_2\otimes I_2 \right)\end{array}\right] \left(I_N \otimes W_{12}\right)P_{N}
\left[\begin{array}{c}\left(M_2\otimes I_2 \right)\\O_4\\ \vdots \\ O_4 \\ \left(M_1 \otimes I_2 \right)\end{array}\right] 
\left[\begin{array}{cc}I_{4}&O_{4\times 4N}\end{array}\right]\nonumber \\
&&=\left[\begin{array}{ccccc}\left[\begin{array}{cc}O_2&O_2\\ h_2 R& h_1I_2\end{array}\right]&O_4&\cdots & O_4 &\left[\begin{array}{cc}h_1 I_2&h_2R\\ O_2&O_2\end{array}\right]\end{array}\right]  P_{N}\left[\begin{array}{c}\left(M_2\otimes I_2 \right)\\O_4\\ \vdots \\ O_4 \\ \left(M_1 \otimes I_2 \right)\end{array}\right]\left[\begin{array}{cc}I_{4}&O_{4\times 4N}\end{array}\right]. \nonumber
\end{eqnarray}
Now let us investigate the matrix $P_{N}=\left(I_{4N}-S_{N,22}\left(I_N \otimes W_{12} \right) \right)^{-1}$, where $W_{12}$ is as given in (\ref{eq:H}).

It is easy to observe that $I_{4N}-S_{N,22}\left(I_N \otimes W_{12} \right)$
is an $L^\curvearrowright _{2\times 2}$-matrix, according to Definition~\ref{def:L2-matrix}. Thus, the matrix $P_N$ is an $L^\curvearrowright _{2\times 2}$-matrix, based on Proposition~\ref{propo:2}. Let $P_{N}=[P_{i,j}]$ where $P_{i,j} \in \mathbb{R}^{2 \times 2}$ and  let  $p_{i,j} \in \mathbb{R}$ be the element at the $i_{th}$ row and $j_{th}$ column of $P_N$. 
We have
\begin{eqnarray}
P_{i,j}&=&\left[\begin{array}{cc}
p_{2i-1,2j-1}& p_{2i-1,2j}\\
p_{2i,2j-1}& p_{2i,2j}\\
\end{array}\right].
\end{eqnarray}
Following Definition~\ref{def:L2-matrix}, for $1 \leq i \leq 2N$ and $N+1 \leq i \leq 2N$
\begin{eqnarray}
&&p_{2i-1,2j}=p_{2i,2j-1}=p_{4N+1-2i,4N+2-2j}=p_{4N+2-2i,4N+1-2j}=0, \nonumber\\
&&p_{2i-1,2j-1}=p_{2i,2j}=p_{4N+1-2i,4N+1-2j}=p_{4N+2-2i,4N+2-2j}, ~{\rm for~} i+j ~{\rm is~ even }, \nonumber\\
&&p_{2i-1,2j-1}=-p_{2i,2j}=p_{4N+1-2i,4N+1-2j}=-p_{4N+2-2i,4N+2-2j}, ~{\rm for~} i+j ~{\rm is~ odd }.  \label{eq:pij}
\end{eqnarray}
By using (\ref{eq:pij}), we obtain a transfer matrix of the form
\begin{eqnarray}
H_{N}=\left[\begin{array}{cccc}
u&0&v&0\\
0&u&0&-v\\
v&0&u&0\\
0&-v&0&u
\end{array}\right]\left[\begin{array}{cc}I_{4}&O_{4\times 4N}\end{array}\right], \label{eq:Hcfb}
\end{eqnarray}
where $u=h_1 p_{4N-3,1}+h_2 p_{4N-1,1}$ and $v=h_1 p_{3,1}+h_2 p_{1,1}$.
Let $Q_N=\left(P_N\right)^{-1}= I_{4N}-S_{N,22}\left(I_N \otimes W_{12} \right)$ and $q_{i,j}$ be the matrix element at the $i$-th row and $j$-th column.  As the $1$-st row of $Q_N$ has only one non-zero element $q_{1,1}$,  thus $p_{1,1}=1$. Moreover, as the second last row of a matrix formed by removing the $1$-st row and $(4N-1)$-th column of $Q_N$ is a zero vector, $p_{4N-1,1}=0$.
Therefore, we obtain $u=h_1 p_{4N-3,1}$ and $v=h_1 p_{3,1}+h_2$.

Define $\xi_{out,a}^Q+\xi_{out,b}^Q=H^Q\xi$ and $\xi_{out,a}^P-\xi_{out,b}^P=H^P\xi$, where $\xi_{out,a}=\xi_{out,1}e^{\imath \theta_a}$ and $\xi_{out,b}=\xi_{out,2}e^{\imath \theta_b}$. We have 
\begin{eqnarray}
&&H^Q=\left[\begin{array}{cccc}
\cos \theta_a &-\sin \theta_a & \cos \theta_b & -\sin \theta_b
\end{array}\right]H_{N}  \nonumber\\
&&H^P=\left[\begin{array}{cccc}
\sin \theta_a &\cos \theta_a & -\sin \theta_b & -\cos \theta_b
\end{array}\right]H_{N}. \nonumber
\end{eqnarray}
Following (\ref{eq:V_+}) and (\ref{eq:V_-}), we obtain that 
\begin{eqnarray}
V_\pm(\theta_a, \theta_b)=2\left(u^2+v^2+2uv\cos(\theta_a+\theta_b) \right). \label{eq:Vuv}
\end{eqnarray}
Let $\Upsilon=uv$. It follows that the two-mode squeezing spectra are optimal when $\theta_a$ and $\theta_b$ ($-\pi<\theta_a,\theta_b \leq \pi$) satisfy
\begin{eqnarray}
|\theta_a+\theta_b|=\pi,~ {\rm for}~ \Upsilon >0; \nonumber\\
\theta_a+\theta_b=0 ~ {\rm or}~ \theta_a=\theta_b=\pi ~ {\rm for}~ \Upsilon <0; \nonumber\\
-\pi<\theta_a,\theta_b \leq \pi, ~ {\rm for}~ \Upsilon =0. \nonumber
\end{eqnarray}
The optimal two-mode squeezing $V_\pm$ in the infinite bandwidth limit are
\begin{eqnarray}
V_\pm &=& 2\left(u^2+v^2-2|uv|\right) 
=\left\lbrace \begin{array}{c}
2\left(u-v\right)^2 ~ {\rm for}~ \Upsilon \geq 0; \nonumber\\
2\left(u+v\right)^2 ~ {\rm for}~ \Upsilon <0. \nonumber\\
\end{array}   \right. \label{eq:Vuv_optimized}
\end{eqnarray}

%%%%%%%%%%%%%%%%%%%%%%%%%%%%%%%%%%%%%%%%%%%%%%%%%%%%%%%%%%%%%%%%%%%%%%%%%%%%%%%%%%%%%%%%%%%
%%%%%%%%%%%%%%%%%%%%%%%%%%%%%%%%%%%%%%%%%%%%%%%%%%%%%%%%%%%%%%%%%%%%%%%%%%%%%%%%%%%%%%%%%%%%%%
%%%%%%%%%%%%%%%%%%%%%%%%%%%%%%%%%%%%%%%%%%%%%%%%%%%%%%%%%%%%%%%%%%%%%%%%%%%%%%%%%%%%%%%%%%%%
\textbf{Part 2:}
To calculate $u$ and $v$, now let us compute $p_{3,1}$ and $p_{4N-3,1}$.
Let $R_{1,N}$ be a matrix formed by removing the first row and the third 
\begin{eqnarray}
T_{1,N}=\left[\begin{array}{c|c}
1&\begin{array}{ccc}0&\cdots &0\end{array} \\\hline
\begin{array}{c}0\\\vdots \\0\end{array}&R_{1,N}
\end{array}\right] . \label{eq:T1N}
\end{eqnarray}
Thus, $p_{3,1}=\frac{\det(R_{1,N})}{\det(Q_N)}=\frac{\det(T_{1,N})}{\det(Q_N)}$.
To derive $\det(T_{1,N})$, we first introduce matrices
\begin{eqnarray}
&&T_{1,B}=\left[\begin{array}{cc}
O_2&O_2\\
-h_2R&-h_1I_2
\end{array}\right], \nonumber\\
&&T_{1,C}=\left[\begin{array}{cc}
-h_1I_2&-h_2R\\
O_2&O_2
\end{array}\right],  \nonumber\\
&&T_{1,D}(m_k,n_k)=\left[\begin{array}{cc}
I_2&O_2\\
m_kR&n_kI_2
\end{array}\right] \nonumber\\
&&T_{1}(m_k,n_k)=\left[\begin{array}{cccccccc}
1&0&0&0&0&0&0&0\\
0&0&1&0&0&0&0&0\\
0&0&0&0&-h_2&0&-h_1&0\\
0&0&0&1&0&h_2&0&-h_1\\
0&-h_1&0&0&1&0&0&0\\
0&0&-h_1&h_2&0&1&0&0\\
0&0&0&0&m_k&0&n_k&0\\
0&0&0&0&0&-m_k&0&n_k
\end{array}\right],\label{eq:T1submatrices}
\end{eqnarray}
where $k=1, 2, \cdots$, $m_1=0$ and $n_1=1$.
Moreover, we obtain that
\begin{eqnarray}
&&\det \left(T_{1D}(m_k,	n_k)\right)=n_k^2 \nonumber\\
&&T_{1D}(m_1,n_1)-T_{1B} T_{1D}(m_k,n_k)^{-1} T_{1C}=T_{1D}(m_{k+1},n_{k+1})\nonumber\\
&&\det \left(T_1(m_k,n_k)\right) =-h_1(h_1m_k-h_2n_k)(h_1h_2m_k-h_2^2n_k), \label{eq:T1formula}
\end{eqnarray}
where $m_{k+1}=-h_1h_2+\frac{h_1^2m_{k}}{n_{k}}$ and $n_{k+1}=1-h_2^2+\frac{h_1h_2m_{k}}{n_{k}}$.

For $N=2$, $T_{1,2}=T_1(m_1,n_1)$.
Thus, $\det(T_{1,2})=-h_1(h_1m_1-h_2n_1)(h_1h_2m_1+n_1-h_2^2n_1)$.

For $N\geq 3$,
\begin{eqnarray}
T_{1,N}=\left[\begin{array}{c|c}
T_{1,N-1}& \begin{array}{c}O_{4(N-2)\times 4}\\T_{1,B} \end{array} \\ \hline
\begin{array}{cccc}O_{4\times 4(N-2) }&T_{1,C} \end{array}&T_{1D}(m_1,n_1)
\end{array}\right] . \label{eq:T1N-2}
\end{eqnarray}
Exploiting  (\ref{eq:T1formula}) and the fact that $\det \left[\begin{array}{cc}
A&B\\C&D \end{array}\right]= \det(D)\det(A-BD^{-1}C)$ when $D$ is invertible, we have
\begin{eqnarray}
\det(T_{1,N})&&=\det(T_{1D}(m_1,n_1)\det\left( \left[\begin{array}{c|c}
T_{1,N-2}& \begin{array}{c}O_{4(N-3)\times 4}\\T_{1,B} \end{array} \\ \hline
\begin{array}{cccc}O_{4\times 4(N-3) }&T_{1,C} \end{array}&T_{1D}(m_1,n_1)
\end{array}\right]
   \right)\nonumber\\
&&=\prod\limits_{k=1} ^{N-3}\det\left(T_{1D}(m_k,n_k)\right)
\det\left( \left[\begin{array}{c|c}
T_{1,2}& \begin{array}{c}O_{4}\\T_{1,B} \end{array} \\ \hline
\begin{array}{cccc}O_4&T_{1,C} \end{array}&T_{1D}(m_{N-2},n_{N-2})
\end{array}\right]
   \right)\nonumber\\
&&=\prod\limits_{k=1} ^{N-2}\det\left(T_{1D}(m_k,n_k)\right)\det\left(T_1(m_{N-1},n_{N-1}\right)\nonumber\\
&&=\prod\limits_{k=1} ^{N-2} n_k^2(-h_1(h_1m_{N-1}-h_2n_{N-1}) (h_1h_2m_{N-1}+n_{N-1}-h_2^2n_{N-1})).\label{eq:detT1N}
\end{eqnarray}

%%%%%%%%%%%%%%%%%%%%%%%%%%%%%%%%%%%%%%%%%%%%%%%%%%%%%%%%%%%%%%%%%%%%%%%%%%%%%%%%%%%%%%%%%%%%%%
%%%%%%%%%%%%%%%%%%%%%%%%%%%%%%%%%%%%%%%%%%%%%%%%%%%%%%%%%%%%%%%%%%%%%%%%%%%%%%%%%%%%%%%%%%%%
Let $R_{2,N}$ be a matrix formed by removing the first row and the $(4N-3)$-th column of the matrix $Q_N$. Define 
\begin{eqnarray}
T_{2,N}=\left[\begin{array}{c|c}
R_{2,N}&\begin{array}{c}0\\\vdots \\0\end{array} \\\hline
\begin{array}{ccc}0 &\cdots &0\end{array}&1
\end{array}\right] . \label{eq:T2N}
\end{eqnarray}
Thus, $p_{4N-3,1}=\frac{\det(R_{2,N})}{\det(Q_N)}=\frac{\det(T_{2,N})}{\det(Q_N)}$.
To derive $\det(T_{2,N})$, we firstly introduce matrices
\begin{eqnarray}
&&T_{2A}(m_k,n_k)=\left[\begin{array}{cccc}
0&n&0&-m\\
0&0&1&0\\
0&0&0&1\\
-h_1&0&-h_2&0
\end{array}\right], \nonumber\\
&&T_{2B}=\left[\begin{array}{cccc}
0&0&0&0\\
-h_2&0&-h_1&0\\
0&h_2&0&-h_1\\
1&0&0&0
\end{array}\right],  \nonumber\\
&&T_{2C}=\left[\begin{array}{cccc}
0&-h_1&0&h_2\\
0&0&0&0\\
0&0&0&0\\
0&0&0&0\\
\end{array}\right],  \nonumber\\
&&T_{2}(m_k,n_k)=\left[\begin{array}{cccccccc}
0&n_k&0&-m_k&0&0&0&0\\
0&0&1&0&0&-h_1&0&0\\
0&0&0&1&h_2&0&-h_1&0\\
-h_1&0&-h_2&0&0&0&0&0\\
0&-h_1&0&h_2&1&0&0&0\\
0&0&0&0&0&1&0&0\\
0&0&0&0&0&0&1&0\\
0&0&0&0&0&0&0&1
\end{array}\right],\label{eq:T2submatrices}
\end{eqnarray}
where $k=1, 2, \cdots$, $m_1=0$ and $n_1=1$.
We have
\begin{eqnarray}
&&\det (T_{2A}(m_k,n_k))=h_1n_k \nonumber\\
&&T_{2A}(m_1,n_1)-T_{2C} T_{2A}(m_k,n_k)^{-1} T_{2B}=T_{2A}(m_{k+1},n_{k+1})\nonumber\\
&&\det \left(T_{2}(m_k,n_k)\right) =h_1(h_1h_2m_k+n_k-h_2^2n_k), 
\label{eq:T2formula} 
\end{eqnarray}
where $m_{k+1}=-h_1h_2+\frac{h_1^2m_{k}}{n_{k}}$ and $n_{k+1}=1-h_2^2+\frac{h_1h_2m_{k}}{n_{k}}$.

For $N=2$,
$\det (T_{2,2})=\det (T_{2}(m_1,n_1))=h_1(h_1h_2m_1+n_1-h_2^2n_1)$.

For $N\geq 3$, 
\begin{eqnarray}
T_{2,N}=\left[\begin{array}{c|c}
T_{2A}(m_1,n_1)& \begin{array}{cccc} T_{2B} &O_{4\times 4(N-2)}  \end{array} \\ \hline
\begin{array}{c}T_{2C}\\O_{4(N-2)\times 4} \end{array}&T_{2,N-1}
\end{array}\right] .  \label{eq:T2N-2}
\end{eqnarray}
Exploiting  (\ref{eq:T2formula}) and the fact that $\det \left[\begin{array}{cc}A&B\\C&D \end{array}\right]= \det(A)\det(D-CA^{-1}B)$  when $A$ is invertible,  we obtain
\begin{eqnarray}
\det(T_{2,N})=h_1^{N-1}(h_1h_2m_{N-1}+n_{N-1}-h_2^2n_{N-1})  \prod\limits_{k=1} ^{N-2} n_k. \label{eq:detT2N}
\end{eqnarray}

%%%%%%%%%%%%%%%%%%%%%%%%%%%%%%%%%%%%%%%%%%%%%%%%%%%%%%%%%%%%%%%%%%%%%%%%%%%%%%%%%%%%%%%%%%%%%%
%%%%%%%%%%%%%%%%%%%%%%%%%%%%%%%%%%%%%%%%%%%%%%%%%%%%%%%%%%%%%%%%%%%%%%%%%%%%%%%%%%%%%%%%%%%%
At last, we derive $\det(Q_N)$. Let $T_{3,N}=Q_N$ and define
\begin{eqnarray}
&&T_{3D}(m_k,n_k)=T_{1D}(m_k,n_k) ,\nonumber\\
&&T_{3B}=T_{1B},\nonumber\\
&&T_{3C}=T_{1C},  \nonumber\\
&&T_{3}(m_k,n_k)=\left[\begin{array}{cc}
I_4&T_{3B}\\
T_{3C}&T_{3D}(m_k,n_k)
\end{array}\right],
\label{eq:T3submatrices}
\end{eqnarray}
where $k=1, 2, \cdots$, $m_1=0$ and $n_1=1$.
Moreover,
\begin{eqnarray}
&&\det (T_{3D}(m_k,n_k))=n_k^2 \nonumber\\
&&T_{3D}(m_1,n_1)-T_{3B} T_{3D}(m_k,n_k)^{-1} T_{3C}=T_{3D}(m_{k+1},n_{k+1})\nonumber\\
&&\det \left(T_3(m_k,n_k)\right) =(h_1h_2m_k+n_k-h_2^2n_k)^2, \label{eq:T3formula}
\end{eqnarray}
where $m_{k+1}=-h_1h_2+\frac{h_1^2m_{k}}{n_{k}}$ and $n_{k+1}=1-h_2^2+\frac{h_1h_2m_{k}}{n_{k}}$.

For $N=2$, $T_{3,2}=T_{3}(m_1,n_1)=(h_1h_2m_1+n_1-h_2^2n_1)^2$.

For $N \geq 3$,
\begin{eqnarray}
T_{3,N}=\left[\begin{array}{c|c}
T_{3,N-1}& \begin{array}{c}O_{4(N-2)\times 4}\\T_{3B} \end{array} \\ \hline
\begin{array}{cccc}O_{4\times 4(N-2) }&T_{3C} \end{array}&T_{3D}(m_1,n_1)
\end{array}\right] . \label{eq:T3N-2}
\end{eqnarray}
Exploiting  (\ref{eq:T3formula}) and the fact that $\det \left[\begin{array}{cc}
A&B\\C&D \end{array}\right]= \det(D)\det(A-BD^{-1}C)$ when $D$ is invertible, we have
\begin{eqnarray}
\det(T_{3,N})=(h_1h_2m_{N-1}+n_{N-1}-h_2^2n_{N-1})^2\prod\limits_{k=1} ^{N-2} n_k^2.  \label{eq:detT3N}
\end{eqnarray}

Therefore, we have
\begin{eqnarray}
u&=&h_1 p_{4N-3,1}=h_1\frac{\det\left(T_{2,N} \right)}{\det\left(T_{3,N} \right)} =\frac{h_1^N}{(h_1h_2m_{N-1}+n_{N-1}-h_2^2n_{N-1})\prod\limits_{k=0} ^{N-2} n_k}, \nonumber\\
v&=&h_1 p_{3,1}+h_2=h_1\frac{\det\left(T_{1,N} \right)}{\det\left(T_{3,N} \right)}+h_2 \nonumber\\
&=&h2+\frac{-h_1^2(h_1m_{N-1}-h_2n_{N-1})}{(h_1h_2m_{N-1}+n_{N-1}-h_2^2n_{N-1})} \nonumber\\
&=&\frac{h_1h_2^2m_{N-1}+h_2n_{N-1}-h_2^3n_{N-1}-h_1^3m_{N-1}+h_1^2h_2n_{N-1}}{(h_1h_2m_{N-1}+n_{N-1}-h_2^2n_{N-1})}. \nonumber
\end{eqnarray}
where
$N\geq 2$, $n_0=1$, $m_1=0$, $n_1=1$, $m_{k+1}=-h_1h_2+\frac{h_1^2m_{k}}{n_{k}}$ and $n_{k+1}=1-h_2^2+\frac{h_1h_2m_{k}}{n_{k}}$, $1 \leq k \leq N-2$.
The proof is completed.
\end{proof}

%%%%%%%%%%%%%%%%%%%%%%%%%%%%%%%%%%%%%%%%
%%%%%%%%%%%%%%%%%%%%%%%%%%%%%%%%%%%%%%%%%%%%%%%%%%%%%%%%%%%%%%%%%%%%%%%%%%%%%%%%
%\addtolength{\textheight}{-2cm}
\bibliography{refs} 

\begin{thebibliography}{10}

\bibitem{Yonezawa2004}
H.~Yonezawa, T.~Aoki, and A.~Furusawa.
\newblock Demonstration of a quantum teleportation network for continuous
  variables.
\newblock {\em Nature}, 431:430, 2004.

\bibitem{Jouguet2013}
P.~Jouguet, S.~Kunz-Jacques, A.~Leverrier, P.~Grangier, and E.~Diamanti.
\newblock Experimental demonstration of long-distance continuous-variable
  quantum key distribution.
\newblock {\em Nature Photonics}, 7:378, 2013.

\bibitem{Briegel1998}
H.~J. Briegel, W.~D{\"u}r, J.~I. Cirac, and P.~Zoller.
\newblock Quantum repeaters: the role of imperfect local operations in quantum
  communication.
\newblock {\em Phys. Rev. Lett.}, 81:5932, 1998.

\bibitem{Horodecki2009}
R.~Horodecki, P.~Horodecki, M.~Horodecki, and K.~Horodecki.
\newblock Quantum entanglement.
\newblock {\em Rev. Mod. Phys.}, 81:865, 2009.

\bibitem{Bowen2004}
W.~P. Bowen, R.~Schnabel, P.~K. Lam, and T.~C. Ralph.
\newblock Experimental characterization of continuous-variable entanglement.
\newblock {\em Phys. Rev. A}, 69:012304, 2004.

\bibitem{Weedbrook2012}
C.~Weedbrook, S.~Pirandola, R.~Garcia-Patron, N.~J. Cerf, T.~C. Ralph, J.~H.
  Shapiro, and S.~Lloyd.
\newblock {Gaussian} quantum information.
\newblock {\em Rev. Mod. Phys}, 84:621, 2012.

\bibitem{Ou1992}
Z.~Y. Ou, S.~F. Pereira, and H.~J. Kimble.
\newblock Realization of the {Einstein-Podolsky-Rosen} paradox for continuous
  variables in nondegenerate parametric amplification.
\newblock {\em Appl. Phys. B}, 55:265, 1992.

\bibitem{Vitali2006}
D.~Vitali, G.~Morigi, and J.~Eschner.
\newblock Single cold atom as efficient stationary source of {EPR}-entangled
  light.
\newblock {\em Phys. Rev. A}, 74:053814, 2006.

\bibitem{He2007}
W.-P. He and F.-L. Li.
\newblock Generation of broadband entangled light through cascading
  nondegenerate optical parametric amplifiers.
\newblock {\em Phys. Rev. A}, 76:012328, 2007.

\bibitem{SN2015qip}
Z.~Shi and H.~I. Nurdin.
\newblock Coherent feedback enabled distributed generation of entanglement
  between propagating {Gaussian} fields.
\newblock {\em Quantum Information Processing}, 14:337, 2015.

\bibitem{SN2015qic}
Z.~Shi and H.~I. Nurdin.
\newblock Entanglement in a linear coherent feedback chain of nondegenerate
  optical parametric amplifiers.
\newblock {\em Quantum Information and Computation}, 15(13 \& 14), 2015.

\bibitem{Gough2009}
J.~E. Gough and S.~Wildfeuer.
\newblock Enhancement of field squeezing using coherent feedback.
\newblock {\em Phys. Rev. A}, 80:042107, 2009.

\bibitem{SN2015acc}
Z.~Shi and H.~I. Nurdin.
\newblock Optimization of distributed entanglement generated between two
  {Gaussian} fields by the modified steepest descent method.
\newblock In {\em Proceedings of the 2015 American Control Conference (Chicago,
  US, July 1-3, 2015)}.

\bibitem{bGardiner2004}
C.~W. Gardiner and P.~Zoller.
\newblock {\em Quantum Noise}.
\newblock Springer-Verlag Berlin Heidelberg, 3rd edition, 2004.

\bibitem{Nurdin2009}
H.~I. Nurdin, M.~R. James, and A.~C. Doherty.
\newblock Network synthesis of linear dynamical quantum stochastic systems.
\newblock {\em SIAM J. Control Optim.}, 48(4):2686, 2009.

\bibitem{inbBelavkin2008}
V.~P. Belavkin and S.~Edwards.
\newblock {\em Quantum Stochastics and Information - Statistics, Filtering and
  Control}, chapter Quantum filtering and optimal control, page 143.
\newblock World Scientific, University of Nottingham, UK, 2008.

\bibitem{bWiseman2010}
H.~M. Wiseman and G.~J. Milburn.
\newblock {\em Quantum Measurement and Control}.
\newblock Cambridge University Press, 2010.

\bibitem{Gough2003}
J.~E. Gough.
\newblock Quantum white noises and the master equation for {Gaussian} reference
  states.
\newblock {\em Russ. J. Math. Phys.}, 10(2):142, 2003.

\bibitem{Laurat2005}
J.~Laurat, G.~Keller, J.~A. Oliveira-Huguenin, C.~Fabre, T.~Coudreau,
  A.~Serafini, G.~Adesso, and F.~Illuminati.
\newblock Entanglement of two-mode {Gaussian} states: characterization and
  experimental production and manipulation.
\newblock {\em J. Opt. B: Quantum Semiclass. Opt.}, 7:S577, 2005.

\bibitem{Gough2010}
J.~E. Gough, M.~R. James, and H.~I. Nurdin.
\newblock Squeezing components in linear quantum feedback networks.
\newblock {\em Phys. Rev. A}, 81:023804, 2010.

\bibitem{Nurdin2012}
H.~I. Nurdin and N.~Yamamoto.
\newblock Distributed entanglement generation between continuous-mode
  {Gaussian} fields with measurement-feedback enhancement.
\newblock {\em Phys. Rev. A}, 86:022337, 2012.

\bibitem{bCarmichael2008}
H.~J. Carmichael.
\newblock {\em Statistical Methods in Quantum Optics 2}.
\newblock Springer-Verlag Berlin Heidelberg, 2008.

\bibitem{bBachor2009}
H-A. Bachor and T.~C. Ralph.
\newblock {\em A Guide to Experiments in Quantum Optics}.
\newblock Wiley-VCH, second, revised and enlarged edition, 2009.

\bibitem{Iida2012}
S.~Iida, M.~Yukawa, H.~Yonezawa, N.~Yamamoto, and A.~Furusawa.
\newblock Experimental demonstration of coherent feedback control on optical
  field squeezing.
\newblock {\em IEEE Trans. Autom. Control}, 57(8):2045, 2012.

\bibitem{Gough2008}
J.~E. Gough, R.~Gohm, and M.~Yanagisawa.
\newblock Linear quantum feedback networks.
\newblock {\em Phys. Rev. A}, 78:062104, 2008.

\end{thebibliography}
%uncomment next line to change bibliography name to references
%\renewcommand{\bibname}{References}
%http://sites.stat.psu.edu/~surajit/present/bib.htm
\bibliographystyle{unsrt}
%en.wikibooks.org/wiki/LaTeX/Bibliography_Management
\end{document}